    \renewcommand\footnotetextcopyrightpermission[1]{} 
\newtheorem{theorem}{Theorem}
\newtheorem{lemma}{Lemma}
\newtheorem{cor}{Corollary}
	\newtheorem*{defn*}{Definition}
	\newtheorem*{lemma*}{Lemma}
	\newtheorem{defn}{Definition}
\newcommand{\minihead}[1]{{\vspace{.45em}\noindent\textbf{#1.} }}
\newcommand{\ssep}{:}
\newcommand{\DD}{\mathcal{D}}
\newcommand{\QQ}{\mathcal{Q}}
\newcommand{\pot}{\phi}
\newcommand{\ea}{\varepsilon}
\newcommand{\er}{{\epsilon'}}
\newcommand{\pre}{\text{Pre}}
\newcommand*{\preagg}{{AggPre}\xspace}
\newcommand*{\coopf}{\texttt{Coop\-Freq}\xspace}
\newcommand*{\coopq}{\texttt{Coop\-Quant}\xspace}
\newcommand*{\sampling}{\texttt{USample}\xspace}
\newcommand*{\dyadic}{\texttt{Hier\-archy}\xspace}
\newcommand*{\cms}{\texttt{CMS}\xspace}
\newcommand*{\truncation}{\texttt{Trun\-cation}\xspace}
\newcommand*{\kll}{\texttt{KLL}\xspace}
\newcommand*{\sbopt}{\texttt{SB}\xspace}
\newcommand*{\samplingp}{\texttt{USample:Prop}\xspace}
\newcommand*{\strat}{\texttt{STRAT}\xspace}
\newcommand*{\uniform}{\texttt{Uni\-form}\xspace}
\newcommand*{\zipf}{\texttt{Zipf}\xspace}
\newcommand*{\caida}{\texttt{CAIDA}\xspace}
\newcommand*{\insta}{\texttt{Insta\-cart}\xspace}
\newcommand*{\power}{\texttt{Power}\xspace}
\newcommand*{\servicet}{\texttt{Tra\-ffic}\xspace}
\newcommand*{\serviceo}{\texttt{OS\-Build}\xspace}
\newcommand*{\servicep}{\texttt{Pro\-vider}\xspace}
\newcommand*{\storyboard}{Story\-board\xspace}
\newcommand*{\msft}{Micro\-soft\xspace}
\newcommand*{\imply}{Imply\xspace}
\newcommand*{\pps}{PPS\xspace}
  \providecommand\BibTeX{{%
    \normalfont B\kern-0.5em{\scshape i\kern-0.25em b}\kern-0.8em\TeX}}}
\begin{document}

\iftoggle{arxiv}{
	\title{Storyboard: Optimizing Precomputed Summaries for Aggregation}
	\titlenote{Preprint. Under review.}
	\author{Edward Gan, Peter Bailis, Moses Charikar}
	\affiliation{
	    \institution{Stanford University}
	}
	\begin{abstract}
An emerging class of data systems partition their data and precompute approximate summaries (i.e., sketches and samples) for each segment to reduce query costs. They can then aggregate and combine the segment summaries to estimate results without scanning the raw data. However, given limited storage space each summary introduces approximation errors that affect query accuracy. For instance, systems that use existing mergeable summaries cannot reduce query error below the error of an individual precomputed summary. We introduce Storyboard, a query system that optimizes item frequency and quantile summaries for accuracy when aggregating over multiple segments. Compared to conventional mergeable summaries, Storyboard leverages additional memory available for summary construction and aggregation to derive a more precise combined result. This reduces error by up to 25$\times$ over interval aggregations and 4.4$\times$ over data cube aggregations on industrial datasets compared to standard summarization methods, with provable worst-case error guarantees.
\end{abstract}
	\maketitle
}{
	\title{Storyboard: Optimizing Precomputed Summaries for Aggregation}
	\numberofauthors{3}
	\author{
	  Edward Gan, Peter Bailis, Moses Charikar \\
	  \affaddr{Stanford University}\\
	  \affaddr{Stanford, CA, USA}
	}
	\maketitle
	
}
\section{Introduction}
\label{sec:intro}
An emerging class of data systems precompute aggregate summaries over a dataset to reduce query times.
These precomputation (\preagg~\cite{peng2018aqp}) systems trade off preprocessing
time at data ingest to avoid scanning the data at query time.
In particular, Druid and similar systems partition datasets into disjoint segments 
and precompute summaries for each segment \cite{yang2014druid,IM2018Pinot}.
They can then process queries by aggregating results from the segment summaries.
Unlike traditional data cube systems \cite{gray1997datacube}, the summaries 
go beyond scalar counts and sums and include data structures that can
approximate quantiles and frequent items \cite{cormode2012synopses}.
As an example, our collaborators at \msft often issue queries to estimate 99th percentile
request latencies over hours-long time windows.
Their Druid-like system precomputes quantile summaries \cite{dunning2019tdigest,gan2018moments} 
for 5 minute time segments and then combines summaries to estimate quantiles
over a longer window,
reducing data access and runtime at query time by orders of magnitude \cite{druidblog}.

Although querying summaries is more efficient than querying raw data,
precomputing summaries also limits query accuracy.
Given a total storage budget and many data segments, each
segment summary in an \preagg system has limited storage space --
often <$10$ kilobytes -- and thus limited accuracy \cite{druidquantilesketch}.
Prior work on \emph{mergeable summaries} introduces summaries that can be combined with
no loss in accuracy, and are commonly used in \preagg systems \cite{agarwal2012mergeable,gan2018moments,druidblog}.
However, even mergeable summaries have maximum accuracy capped by the 
accuracy of an individual summary.
\begin{figure}
\includegraphics[width=\columnwidth]{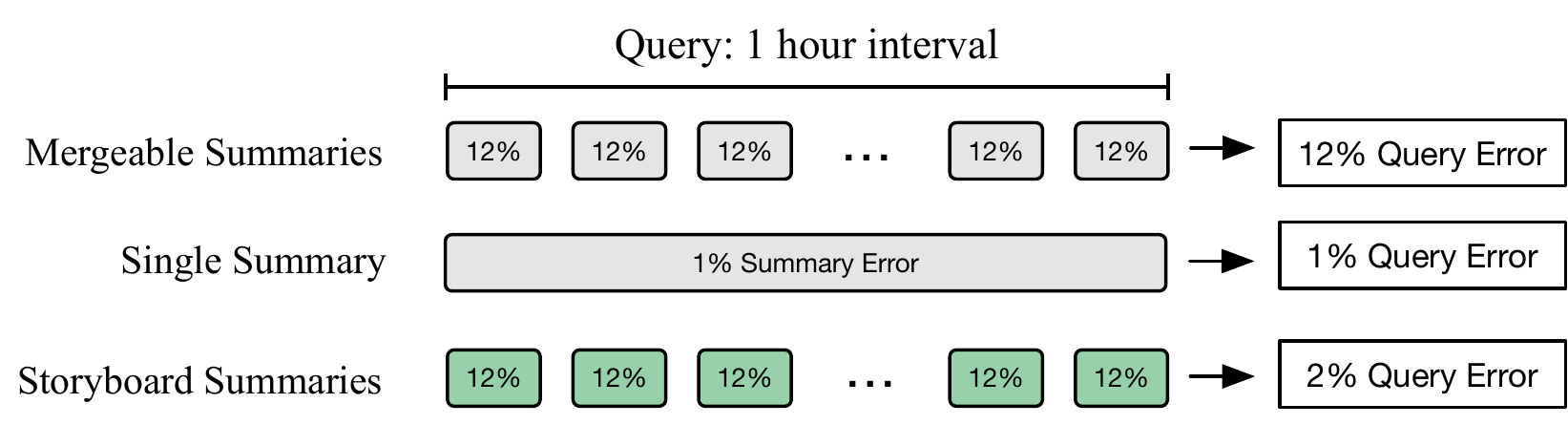}
\caption{Given a space budget, mergeable summaries preserve accuracy when combined but cannot match the accuracy of using a single larger summary. \storyboard closes the gap by
optimizing summaries for accurate aggregations.}
\label{fig:error_tradeoff}
\vspace{-1em}
\end{figure}
We illustrate this challenge in Figure~\ref{fig:error_tradeoff}.
Consider a query for the 99th percentile latency from \texttt{1:05pm} to \texttt{2:05pm}, 
and suppose we precompute mergeable quantile summaries for 5 minute time segments 
that individually have $12\%$ error.
Calculating quantiles over the full hour requires aggregating results from $12$ summaries,
and mergeable summaries would maintain $12\%$ error for the final result.
This is not ideal: if the same space were instead used to store a single large summary for the
entire interval, we would have $12\times$ less error with $\epsilon = 1\%$.
On the other hand, using a single large summary restricts the granularity of
possible queries.

In this paper we introduce \storyboard, an \preagg query system
that optimizes frequent items and quantile summaries for aggregation.
Unlike mergeable summaries, \storyboard queries that combine results from multiple summaries have
lower relative error than any summary individually.
To do so \storyboard uses a different resource model than most existing summaries were designed for.
While mergeable summaries assume the amount of memory available for
constructing and aggregating (combining) summaries is the same as that for storage, we
have seen in real-world deployments that \preagg systems have orders of magnitude more
memory for construction and aggregation.
\storyboard takes advantage of these additional resources to construct summaries that
compensate for the errors in other summaries they may be combined with,
and then aggregate results using a large and precise accumulator.

\storyboard focuses on supporting queries over intervals and data cubes roll-ups.
Interval queries aggregate over one-dimensional contiguous ranges, such as a time window from \texttt{1:00pm} to \texttt{9:00pm} \cite{basat2018interval}, while
data cube queries aggregate over data matching specific dimension values,
such as \texttt{loc=USA AND type=TCP} \cite{gray1997datacube}.
When aggregating $k$ summaries together, \storyboard can reduce relative error by
nearly a factor of $k$ for interval aggregations and a factor of $\sqrt{k}$ for
other aggregations, compared with no reduction in error for mergeable summaries.
These two query types cover a wide class of common queries and \storyboard can construct
summaries optimized for either of the two types.

\minihead{Interval Queries}
For interval queries, \storyboard uses novel summarization techniques which we
call \emph{cooperative} summaries.
Cooperative summaries account for the cumulative error over consecutive sequences of summaries,
and adjust the error in new summaries to compensate.
For instance, if five consecutive item frequency (heavy hitters) 
summaries have tended to underestimate the true frequency of
item $x$, cooperative summaries can bias the next summary to overestimate $x$.
This keeps the total error for queries spanning $k$ segments smaller than existing summarization techniques.
Hierarchical approximation techniques \cite{basat2018interval,qadarji2013hierarchical}
can also be used here but require additional space and provide worse accuracy
in practice.

We prove that our summaries have cumulative error no worse than state of the art randomized
summaries \cite{huang2011optimalsamp}, 
and for frequencies exceed the accuracy of state of the art 
hierarchical approaches \cite{basat2018interval}.
Empirically, cooperative summaries provide a 4-25$\times$ reduction in error on queries
aggregating multiple summaries compared with existing sketching and summarization techniques.

\minihead{Multi-dimensional Cube Queries}
Data cube queries can aggregate the same summary along different dimensions,
so compensating for errors explicitly along a single dimension is insufficient.
Instead, for cube workloads \storyboard used randomized weighted samples 
and reduces error further by optimizing for an expected workload of queries.
\storyboard exploits the fact that data cubes often have dimensions with skewed value distributions: 
some values or combination of values that occur far more frequently than others. 
Then, \storyboard optimizes the allocation of storage space and introduces targeted biases 
where they will have the greatest impact to minimize average query error.
Empirically, these optimizations yield an up to $4.4\times$ reduction in average error 
compared with standard data cube summarization techniques.

In summary, we make the following contributions:
\begin{enumerate}
	\setlength\itemsep{.5em}
	\setlength{\topsep}{0em} 
	\item We introduce Storyboard, an approximate \preagg system that provides improved
	query accuracy for aggregations by taking advantage of additional memory resources
	at data ingest and query time.
	\item We develop cooperative frequency and quantile summaries that minimize error when aggregating over intervals and establish worst-case bounds on their error.
	\item We develop techniques for allocating space and bias among randomized summaries to minimize average error under cube aggregations. 
\end{enumerate}

The remainder of the paper proceeds as follows.
In Section~\ref{sec:sys_motivation} we provide motivating context.
In Section~\ref{sec:system} we present \storyboard and its query model. 
In Section~\ref{sec:timerange} we describe cooperative summaries optimized for intervals.
In Section~\ref{sec:optcube} we describe optimizations for data cubes. 
In Section~\ref{sec:eval} we evaluate \storyboard's accuracy. 
We describe related work in Section~\ref{sec:related} and conclude in Section~\ref{sec:conclusion}.
\section{Background}
\label{sec:sys_motivation}
\storyboard targets aggregation queries common in monitoring 
and data exploration applications, and improves
accuracy for these queries by taking advantage of memory resources that
real world summary precomputation systems already have available at query and ingest time.
This motivation comes out of our experience collaborating with engineers at \imply,
the developers of Druid, as well as a cloud services team at \msft.

\minihead{Aggregation Queries}
Existing usage of \preagg systems feature queries that
span many segment summaries but follow structured patterns.
At a cloud services team at \msft, users interacted with a Druid-like system
primarily through a time-series monitoring dashboard
which they used to track trends and explore anomalies.
In addition to count and average queries,
Top-K (heavy hitters) item frequency and quantile queries were prevalent.
The most common forms of
aggregations were over time intervals and grouped cube roll-ups.
For instance, engineers often wanted to see the most common ip address frequencies
over specific release windows or server configurations.
\begin{figure}
\includegraphics[width=\columnwidth]{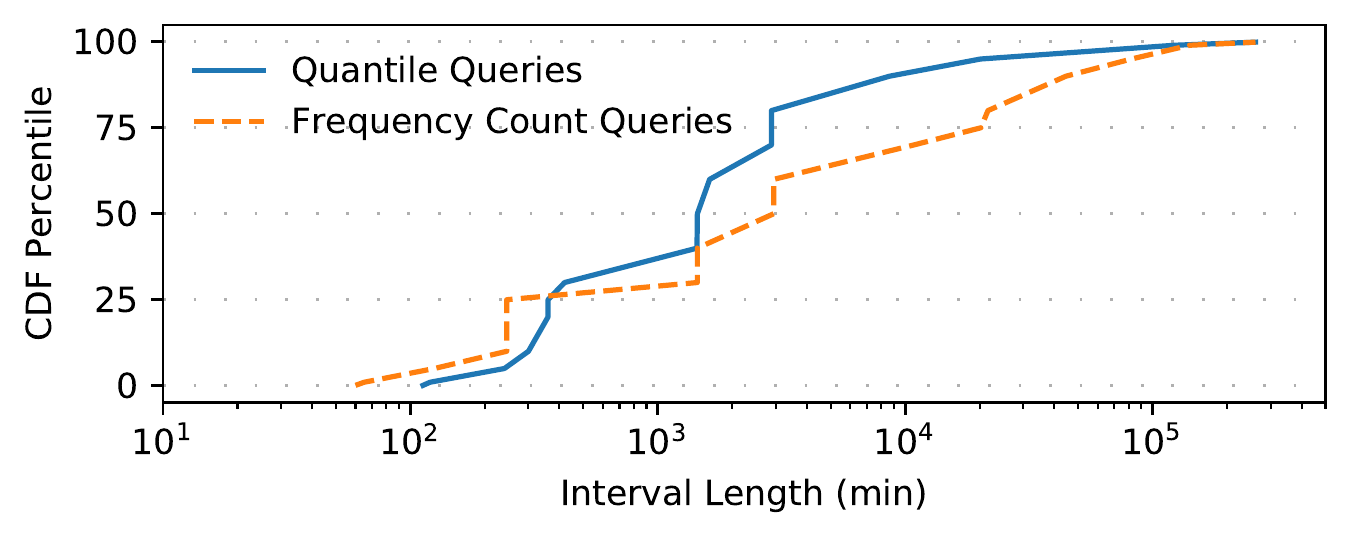}
\caption{Distribution of user-issued time interval queries to a Druid-like system at \msft. More than half of the queries span $>100$ five-minute segments.}
\label{fig:intervals}
\end{figure}

Notably, the queries over time intervals and data cubes involved 
combining results from a large number of summaries.
In Figure~\ref{fig:intervals} we describe a set of 
33K Top-K item frequency queries and 130K quantile queries issued 
to an \preagg system at \msft.
More than half of the queries span intervals longer than a day.
Since the system stored summaries at a 5 minute granularity,
this meant combining results from hundreds of summaries or
reverting to less accurate, coarser grained roll-ups.
Data cube roll-ups were also common. 
Users would commonly issue queries that filtered or grouped on a dimension,
and ``drill-down'' \cite{gray1997datacube} as needed, 
sometimes as part of anomaly explanation systems like MacroBase \cite{abuzaid2018diff,abuzaid2018macrobase}.
Over 50\% of all cubes had more than 10K dimension value segments and queries
that span hundreds of cube segments were common.

\storyboard thus optimizes for the accuracy of time interval and cube queries that span not just a single summary, but aggregate over many summaries.

\minihead{Memory Constraints}
At both \imply and \msft, the storage space available to each summary was limited.
Since summaries are maintained for each of potentially millions of segments,
and nodes have limited memory and cache, the memory must be divided amongst the segment summaries. 
To illustrate, by default each quantile summary in Druid is configured for $2\%$ error and roughly 10 kB of memory usage \cite{druidquantilesketch}.
Thus it is important to use summaries that provide high accuracy with minimal storage overhead.
However, in real-world deployments there is much more memory available during summary construction and aggregation than there is for storage.
For instance, in Druid the use of Hadoop map-reduce jobs for batch data ingestion
allows for effectively unconstrained memory limits during summary construction.
Then, at query time, engineers at \imply report that a standard
deployment uses query processing jobs with $0.5$ GB of memory each when aggregating
summaries together.

While standard mergeable summaries \cite{agarwal2012mergeable} are designed to maintain
the same memory footprint during construction and aggregation that they use in storage,
\storyboard takes advantage of additional memory at both construction and query time to achieve higher query accuracy.
\section{System Design}
\label{sec:system}
\label{sec:sys-arch}
In this section we describe \storyboard's system design.
We discuss the types of queries supported,
how summaries are constructed for different query types,
and how summaries are aggregated to provide accurate query results.
We outline the system components in Figure~\ref{fig:sysoverview}.
\begin{figure}
\includegraphics[width=\columnwidth]{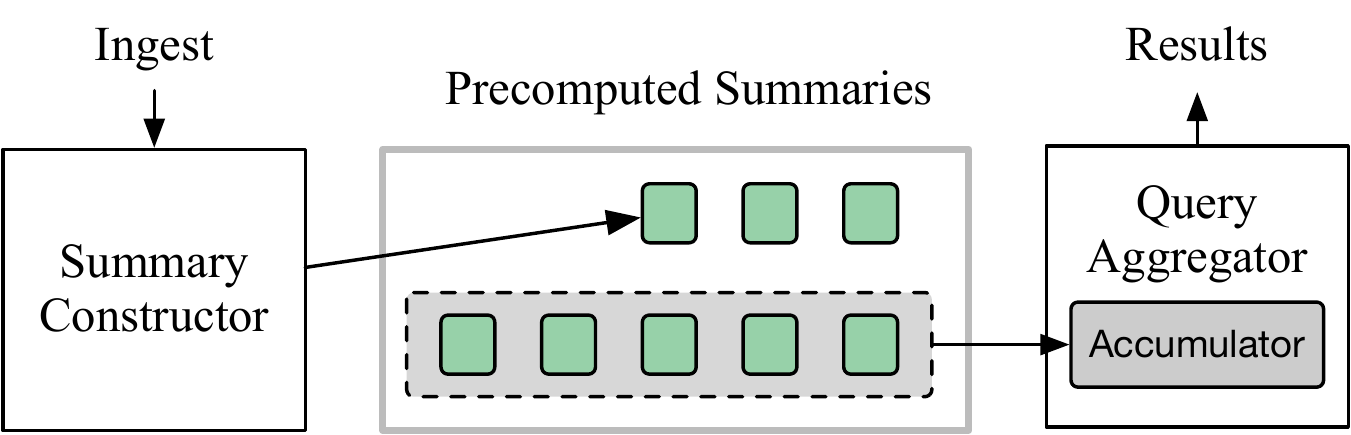}
\caption{\storyboard precomputes summaries at ingest optimized to minimize error under
aggregations. At query time, results from multiple summaries are combined using a precise
accumulator to provide accurate results.}
\label{fig:sysoverview}
\iftoggle{arxiv}{}{\vspace{-1em}}
\end{figure}

\subsection{Queries}
\label{sec:sys-queries}
Consider data records
$\rho = (x, t, d_{1}, \dots, d_{m_{d}})$
where $x$ is either a categorical or ordinal value of interest (i.e. ip address, latency), 
$t$ is an ordered dimension for interval queries (i.e. timestamp), 
and the $d_{j}$ are categorical dimensions (i.e. location).
A \storyboard query $g_{\QQ}(x)$ specifies an aggregation of records $\QQ$
and a function $g$ to estimate for the value $x$.
$\QQ$ defines an \emph{aggregation} with a selection condition:
either a one-dimensional interval or a
multi-dimensional cube query \cite{basat2018interval,gray1997datacube}.
\begin{defn}
An interval aggregation specifies
$
\QQ^{\text{(time)}} = \{\rho \ssep T_0 \leq t < T_1\}
$
for $T_0,T_1$ aligned at a time-resolution $T_G$ ($T_0,T_1 \bmod T_G = 0$) and maximum length $T_1 - T_0 \leq k_T\cdot T_G$.
\label{defn:interval}
\end{defn}
\begin{defn}
A data cube aggregation specifies
$
\QQ^{\text{(cube)}} = \{\rho \ssep d_{i_1} = v_{i_1} \wedge \dots\ \wedge d_{i_k} = v_{i_k} \}
$
for $d_{i_1},\dots,d_{i_k}$ a subset of the dimensions to condition on.
\label{defn:cube}
\end{defn}
The \emph{query function} $g$ is either an item frequency $f$ or rank $r$ \cite{cormode2010frequent,luo2016quantiles}.
An item frequency $f(x)$ is the total count of records with value $x$
while a rank $r(x)$ is the total count of records with values less than or equal to $x$.
We use $g$ generically denote either frequencies or ranks.
\begin{equation}
f_{\QQ}(x) = \sum_{\rho_i \in \QQ} 1_{x_i = x} \qquad\qquad
r_{\QQ}(x) = \sum_{\rho_i \in \QQ} 1_{x_i \leq x}.
\label{eqn:queryfunc_defs}
\end{equation}
Using these primitives, \storyboard can also return estimates for quantiles
and Top K / Heavy Hitters queries, which we will discuss in more detail in Section~\ref{sec:sys_query_proc}.

\subsection{Data Ingest}
\label{sec:sys_construction}
Before \storyboard can ingest data, users specify whether they want the
dataset to support interval or data cube aggregations, and whether they want
the dataset to support frequency or rank query functions.
Users also specify total space constraints and workload parameters.
A dataset can be loaded multiple times to support different combinations of the above.

\storyboard then splits the data records into atomic segments $\DD$.
These segments form a disjoint partitioning of a dataset, and are chosen so that
any aggregation can be expressed as a union of segments.
For interval aggregations users specify a time resolution $T_G$ and a maximum length $k_T$,
defining segments $\DD_i = \{\rho : i \cdot T_G\leq t < (i+1)\cdot T_G\}$
For cube aggregations the partitions are defined by grouping by all 
$m_d$ of the dimensions 
$\DD_{\vec{v}} = \{\rho : d_1 = v_1 \wedge \dots \wedge d_{m_d} = v_{m_d}\}$.
Once the dataset is partitioned we can represent the records in each segment
as mappings from values to counts: 
$$
\DD = \{x_1 \mapsto \delta_1, \dots, x_r \mapsto \delta_r\}.
$$

For each segment $\DD$ \storyboard constructs a summary $S$ 
consisting of $s$ value, count mappings 
$$S \coloneqq \{x_1 \mapsto \gamma_1, \dots, x_s \mapsto \gamma_s\}.$$ 
This is similar to other counter based summaries 
\cite{metwally2005efficient,cormode2010frequent} 
and weighted sampling summaries \cite{huang2011optimalsamp}.
Unlike tabular sketches such as the Count-Min Sketch \cite{cormode2005countmin} 
\storyboard summaries include the values $x$.
We assume we have enough memory and compute to generate
$S$, making our routines closer to coreset construction \cite{philips2017coreset}
than streaming sketches \cite{muthukrishnan2005data}.
More details on how the values $x$ and counts $\gamma$ are chosen for each summary
are given in Section~\ref{sec:coop_construction} for interval aggregations and
Section~\ref{sec:pps} for cube aggregations.

\subsection{Query Processing}
\label{sec:sys_query_proc}
After the summaries have been constructed,
the \storyboard query processor can return query estimates $\hat{g}_{\QQ}(x)$ for different
aggregations $\QQ$ by using the summaries $S_i$ as proxies for the segments $\DD_i$.
Then, using $g$ to denote a generic query function, 
we can derive frequency or rank estimates over a query aggregation $\QQ$ by
adding up the estimates for the segment summaries.
\begin{equation*}
f_{S}(x) \coloneqq \sum_{x_j \in S} \gamma_j \cdot 1_{x_j = x} \qquad
r_{S}(x) \coloneqq \sum_{x_j \in S} \gamma_j \cdot 1_{x_j \leq x}
\end{equation*}
\begin{equation}
\hat{g}_{\QQ}(x) = \sum_{S_i \in \QQ} g_{S_i}(x)
\label{eqn:approx_query}
\end{equation}
For single rank or frequency estimates $\hat{g}_{\QQ}(x)$
the query processor can precisely add up scalar estimates using
Equation~\ref{eqn:approx_query}.
This is more efficient than merging mergeable summaries \cite{agarwal2012mergeable},
since we are just accumulating scalars, and potentially more accurate
as we will discuss in Section~\ref{sec:sys_error}.

\storyboard uses a richer accumulator data structure $A$ to support quantile and top-k / heavy hitter queries.
When there is sufficient memory, $A$ tracks the proxy values and counts in $S_1,\dots,S_k$. 
We then sort the items in $A$ by value to estimate quantiles or by count to estimate heavy hitters.
When memory is constrained, we instead let $A$ be a standard but very 
large stream summary of the proxy values and counts
stored in $S_1,\dots,S_k$.
We specifically use a Space Saving sketch~\cite{metwally2005efficient} for heavy hitters 
and a \pps (\texttt{VarOpt}~\cite{cohen2011structure}) sample for quantiles.
Then we can query $A$ to get quantile or heavy hitters estimates.
In practice the space $s_A$ available to $A$ is orders of 
magnitude greater than the space $s$ 
available to any precomputed summary, 
i.e. 50,000$\times$ in the deployment described in Section~\ref{sec:sys_motivation}.

\subsection{Error Model}
\label{sec:sys_error}
Consider the absolute (i.e. unscaled) error $\ea_{\QQ}$ which is the difference
between the true and estimated item frequency counts, or the difference between the
true and estimated ranks for a query aggregation $\QQ$.
Throughout the paper, we will use absolute errors $\ea$ for analysis, 
when comparing final query quality we use the relative (scaled) error 
$\er = \ea/|\QQ|$ \cite{cormode2010frequent} where $|\QQ| = \sum_{\rho_i \in \QQ} 1$.

When accumulating scalar rank or frequency estimates directly using Equation~\ref{eqn:approx_query}
the error $\ea_{\QQ}(x)$ is just the
sum of the errors introduced by the segment summaries for $\QQ$:
\begin{align}
\ea_{\QQ}(x) 
 = \sum_{\DD_i \in \QQ} \ea_{\DD_i}(x) 
 = \sum_{\DD_i \in \QQ} \left(g_{\DD_i}(x) - g_{S_i}(x)\right)
 \label{eqn:error_accum_def}
\end{align}
When using the accumulator $A$ a quantile or heavy hitter estimate will be based off the
distribution defined by the $\hat{g}_{S}(x)$, but $A$ introduces its own 
additional error $\ea^{\text{(A)}}_{S}$
in approximating the proxy values in the summaries $S$, 
yielding a total error of:
\begin{equation}
\ea^{\text{(A)}}_{\QQ}(x) \leq |\ea_{\QQ}(x)| + |\ea^{\text{(A)}}_{S}(x)|.
\label{eqn:rawerror}
\end{equation}
Furthermore we are interested in systems that provide error bounds over all
values of $x$, so we consider the worst case error 
$
\ea^{\text{(A)}}_{\QQ} \coloneqq \max_x |\ea^{\text{(A)}}_{\QQ}(x)|
$.
A bound on the maximum error over all $x$ also bounds the error of any
quantile or heavy hitter frequency estimate derived from the raw estimates
$\hat{g}$.

To analyze the error, consider an aggregation $\QQ$ accumulating $k$ segments,
each with total weight $n=|\DD|=\sum_{x_i \in \DD} \delta_i$ and represented using summaries of size $s$.
Also, suppose that the accumulator $A$ has size $s_A \gg s$.
Suppressing logarithmic factors, state of the art frequency and quantile summaries 
have absolute error $O(n/s)$ \cite{karnin2016optimalquant,cormode2010frequent,agarwal2012mergeable,philips2017coreset}.
Different summarization techniques yield different errors as the 
size of the aggregation $k$ grows. 
\storyboard can reduce relative error significantly for large $k$.
We summarize the error bounds in Table~\ref{tab:summaryerror}.
\begin{table}[]
\centering
\caption{Summary Error ignoring constants combining $k$ summaries. The summaries used by storyboard: \coopq, \coopf, and \pps, all have reduced errors as for large $k$. \label{tab:errors}}
\label{tab:summaryerror}
\footnotesize
\begin{tabular}{p{3.0cm}|l|l}
\toprule
Summary & Relative $\er_{\mathcal{Q}}$ &Tot. Space\\
\midrule
\coopf & ${\log{k_T}}/({sk}) + 1/s_A$ & $sk$\\
\coopq & $\sqrt{k_T}/({sk}) + 1/s_A$ & $sk$\\
\pps \cite{cohen2011structure} & $1/(s\sqrt{k}) + 1/s_A$ & $sk$\\
\midrule
Mergeable \cite{agarwal2012mergeable} & $1/s$ & $sk$\\
Uniform Sample & $1/\sqrt{sk} + 1/s_A$ & $sk$\\
Hierarchical \cite{basat2018interval,qadarji2013hierarchical} & 
	$\log{k}/(sk) + 1/s_A $ & $sk\log{k_T}$\\
\bottomrule
\end{tabular}
\vspace{-1.5em}
\end{table}

Using mergeable summaries \cite{agarwal2012mergeable} for both $S$ and 
the accumulator $A$ gives us maximum absolute error
$\ea^{\text{(merge)}}_{\QQ} \leq O(|\QQ|/s) \leq O(kn/s)$
and maximum relative error
\begin{align}
\er^{\text{(merge)}}_{\QQ} \leq O(1/s).
\label{eqn:mergesumm_error}
\end{align}

\storyboard's accumulator $A$ applied to standard summaries gives us in the worst case
$\ea^{\text{(A)naive}}_{\QQ} \leq \sum_{\DD_i \in \QQ}|O(n/s)| + O(nk/s_A)$
so
\begin{align}
\er^{\text{(A)naive}}_{\QQ} \leq O(1/s) + O(1/s_A)
\label{eqn:naivesumm_error}
\end{align}
where the $O(1/s_A)$ is negligible for $s_A \gg s$.

However, \storyboard is able to achieve lower query error by reducing the sum of
errors from summaries in Equation~\ref{eqn:error_accum_def}.
By using independent, unbiased, weighted random samples -- 
specifically \pps summaries in Section~\ref{sec:pps} -- 
sums of random errors centered around zero will concentrate to zero, 
and one can use Hoeffding's inequality to
show that with high probability and ignoring log terms
$
\sum_{\DD_i \in \QQ} \ea_{\DD_i}(x) \leq O(\sqrt{k}n/s)
$ so
\begin{align} 
\er^{\text{(A)PPS}}_{\QQ} &\leq O\left(\frac{1}{\sqrt{k}{s}}\right) + O(1/s_A).
\end{align}
This already is lower than the relative error for mergeable summaries in 
Equation~\ref{eqn:mergesumm_error} for $k \gg 1$ and $s_A \gg s$.

In practice, Cooperative summaries (Section~\ref{sec:timerange}) achieve even better error than \pps summaries for interval queries.
We can prove that cooperative quantile summaries satisfy
$\max_{x} |\ea_{\QQ}^{(A)\coopq}(x)| \leq O(n\sqrt{k_T}/s)$
\begin{equation}
\begin{aligned}
\er_{\QQ}^{(A)\coopq} &\leq O\left(\frac{\min(\sqrt{k_T},k)}{ks}\right) + O(1/s_A)
\end{aligned}
\label{eqn:coopq_errorbound1}
\end{equation}
with much better empirical performance over intervals than \pps, while
cooperative frequency summaries satisfy \\
$\max_{x} |\ea_{\QQ}^{(A)\coopf}(x)| \leq O(n\log{k_T}/s)$
\begin{equation}
\begin{aligned}
\er_{\QQ}^{(A)\coopf} &\leq O\left(\frac{\min(\log{k_T},k)}{ks}\right) + O(1/s_A)
\end{aligned}
\label{eqn:coopf_errorbound1}
\end{equation}
where $k_T$ is the maximum length of an interval.
See Section~\ref{sec:interval-error} for more details and proof sketches.

Hierarchical estimation is a common solution for interval (range) queries \cite{basat2018interval,cormode2005countmin} and show up in differential privacy as well \cite{cormode2019rangeprivacy}.
We will describe an instance of these methods to illustrate their error scaling.
A dyadic (base $2$) hierarchy stores summaries of size $s \cdot 2^h$ for 
$h = 1\dots \log{k_T}$ to track segments of different lengths.
They can thus estimate intervals of length $k$ with error $\ea_{\QQ}^{\text{(A)Hier}} \leq O(n\log{k}/s) + O(1/s_A)$, similar to our cooperative frequency sketches.
However, they incur an additional $\log{k_T}$ factor in space usage to maintain
their multiple levels of summaries and provide worse error
empirically than our Cooperative summaries.
\section{Cooperative Summaries}
\label{sec:timerange}
In this section we describe the cooperative summarization algorithms \storyboard uses
for interval queries.
These summaries achieve high query accuracy when combined 
by compensating for accumulated errors over sequences of summaries.

\subsection{Interval Summary Construction}
\label{sec:coop_construction}
For interval queries, we assume that users specify a space limit $s$ to use
for summarizing an incoming data segment $\DD$.
Cooperative summaries then must make efficient use of their $s$ samples ($x_i \mapsto \gamma_i$)
to accurately represent a local segment of data.
To match state of the art summaries, we want
\begin{equation}
\max_{x}|\hat{g}_{S}(x) - g_{\DD}(x)| \leq r|\DD|/s
\label{eqn:localerror_def}
\end{equation} 
for an accuracy parameter $r\geq 1$.
However, there are many possible ways to choose the items to store
in $S$ that would satisfy Equation~\ref{eqn:localerror_def}. 

Within these constraints, \storyboard can choose $x_i,\gamma_i$
to minimize the total error for queries that aggregate multiple summaries.
\storyboard explicitly minimizes the error over a set of queries with 
fixed start points every $k_T$ segments.
We call these aggregation intervals ``prefix'' intervals $\pre_t$, a modification of
standard prefix-sum ranges \cite{Ho1997RangeQueries}.
\begin{equation}
\pre_t = \{\DD_{k_T \lfloor t/k_T\rfloor}, \dots,\DD_{k_T\lfloor t/k_T\rfloor+t\bmod k_T}\}.
\end{equation}
\begin{figure}
\centering
\includegraphics[width=.8\columnwidth]{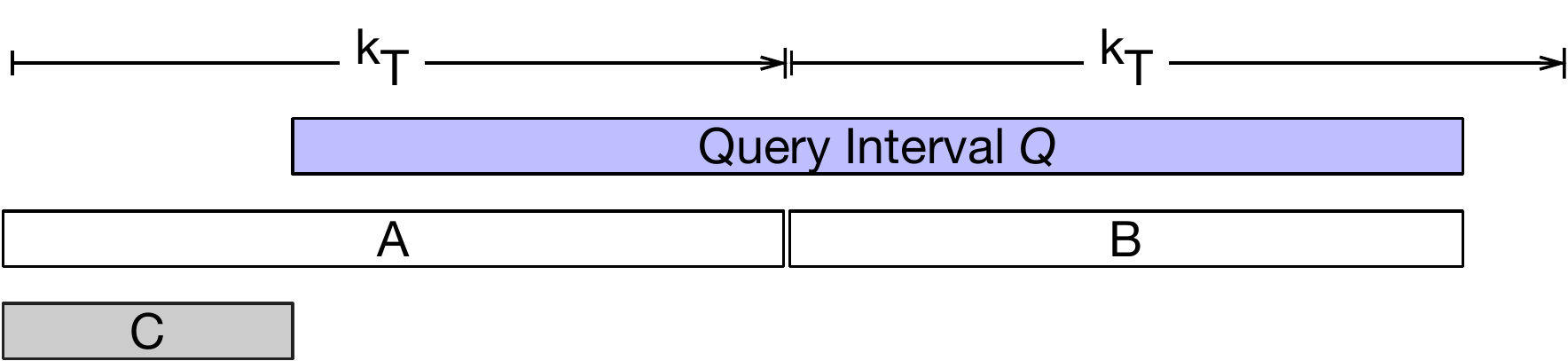}
\caption{Any contiguous interval can be expressed as a linear combination of aligned intervals $\pre_t$. 
In this example, $Q$ is expressed as $A \cup B \setminus C$}
\label{fig:ranges}
\end{figure}
Figure~\ref{fig:ranges} illustrates how any consecutive interval of up to $k_T$ segments
can be represented as an additive combination of up to $3$ prefix intervals.
As long as prefix intervals have bounded error $\ea_{\pre_t} = g_{\pre_t} - \hat{g}_{\pre_t}$, 
any contiguous interval up to length $k_T$ has error at most $3\ea$.
In order to minimize $\ea_{\pre_t}(x)$ \storyboard will have to track
its exact values, which may be resource intensive but is done during
data ingest.
The details of the summary construction algorithm differ for frequencies
and ranks.

\begin{algorithm}
\begin{algorithmic}
\Function{CoopFreq}{$\DD_t,s$}
	\State $h \gets |\DD_t|/s$
	\State $\ea_{\pre_t}(x) \gets \ea_{\pre_{t-1}}(x) + f_{\DD_t}(x)$
	\State $S_t \gets \{x\mapsto f_{\DD_t}(x) \ssep f_{\DD_t}(x) \geq h\}$ 	\Comment Heavy hitters
	\While {$|S_t| < s$} \Comment Correct Accumulated Errors
		\State $x_{m} \gets \arg \max_{x \in \pre_t \setminus S_t} \left(\ea_{\pre_{t}}(x)\right)$
		\State $\delta_{m} \gets \min\left(r\cdot h, \ea_{\pre_{t}}(x)\right)$
		\State $S_t \gets S_t \cup \{x_{m} \mapsto \delta_m\}$
		\State $\ea_{\pre_t}(x) \gets \ea_{\pre_{t}}(x) - \delta_m \cdot 1_{x = x_s}$
	\EndWhile
	\State \textbf{return} $S_t$
\EndFunction
\end{algorithmic}
\caption{Cooperative Item Frequencies Summary}
\label{alg:greedy_freq}
\end{algorithm}
In Algorithm~\ref{alg:greedy_freq} we present pseudocode for constructing a cooperative
summary of size $s$ for frequency estimates on a data segment $\DD_t$.
To satisfy Equation~\ref{eqn:localerror_def} and accurately represent $\DD_t$,
we store the true count for any segment-local 
heavy hitter items in $\DD_t$ that occur with count greater than $|\DD_t|/s$.
The remaining space in the summary is allocated to compensating the $x$ with the highest
cumulative undercount $\ea_{\pre_{t}}(x)$ thus far so that
overcounting $x$ in $S_t$ will adjust for
the undercount in the other summaries going forward.
For each of these compensating $x$, we store the smaller of
$r|\DD|/s$ and $\ea_{\pre_{t}}(x)$. This ensures Equation~\ref{eqn:localerror_def} is satisfied
and also keeps $\ea_{\pre_t}(x)$ positive, a useful invariant for proofs later.
Larger $r$ allow the algorithm trade off higher local error for less error accumulation across summaries.

\begin{algorithm}
\begin{algorithmic}

\Function{CoopQuant}{$\DD_t,s$}
	\State $h \gets |\DD_t|/s$; $S_t \gets \{\}$
	\State $\ea_{\pre_t}(x) \gets \ea_{\pre_{t-1}}(x) + r_{\DD_t}(x)$

	\State $\DD_{t1},\dots,\DD_{ts} \gets \Call{Partition}{\DD_t,s}$ \Comment Sorted Chunks
	\For{$i \in 1\dots s$}
		\State $L(z) \coloneqq \sum_{y \in U} \pot(\ea_{\pre_{t}}(y))$
		\State $x_s \gets \arg \min_{z \in \DD_{ti}} L(z)$ \Comment Minimize Loss
		\State $S_t \gets S_t \cup \{x_s \mapsto h\}$
		\State $\ea_{\pre_t}(x) \gets \ea_{\pre_{t}}(x) - h\cdot 1_{x \geq x_s}$
	\EndFor
	\State \textbf{return} $S_t$
\EndFunction
\end{algorithmic}
\caption{Cooperative Quantile Summary}
\label{alg:greedy_quant}
\end{algorithm}
In Algorithm~\ref{alg:greedy_quant} we present pseudocode for constructing a cooperative
summary of size $s$ for rank estimates on a data segment $\DD_t$.
To satisfy Equation~\ref{eqn:localerror_def} and accurately represent $\DD_t$, 
we sort the values in $\DD$ and partition the sorted values into $s$ equally sized chunks.
Then \coopq selects one value in each chunk to include in $S_t$ as a representative
with proxy count $|\DD|/s$.
This ensures that the any rank can be estimated using $S_t$ with error 
at most $|\DD|/s$.
Within each chunk, we store the item that minimizes a total loss
$L = \sum_{x \in U} \pot(\ea_{\pre_t}(x))$ with
$\pot(\epsilon)=\cosh{(\alpha\epsilon)}$,
$\alpha=s/({\sqrt{k_T}n_{\text{max}}})$,
$n_{\text{max}} = \max_{t} |\DD_t|$ the maximum size of a data segment,
and $k_T$ the maximum interval length.
$\cosh(x) = \frac{1}{2}(e^{x} + e^{-x})$ is used in discrepancy theory \cite{spencer1977balancing} to exponentially penalize
both large positive and large negative errors, so $L$
serves as a proxy for the $L_\infty$ maximum error.
Note that we need to bound $n_\text{max}$
to set $\alpha$ for this algorithm, though in practice accuracy
changes very little depending on $n_\text{max}$
\subsection{Interval Query Error}
\label{sec:interval-error}
\coopf and \coopq both provide estimates with local error $\ea_{\DD}(x) \leq r |\DD|/s$
for a single segment $\DD$, and minimize the cumulative error over $\ea_{\pre_t}(x)$
prefix intervals (and thus general intervals).
In this section we analyze how $\ea_{\pre_t}(x)$ grows with $t$.
This allows us to prove Equations \ref{eqn:coopq_errorbound1} and \ref{eqn:coopf_errorbound1} in Section~\ref{sec:sys_error} which
bound the query error from accumulating results over any sequence of $k_T$
summaries.

The general strategy will be to define a loss $L_t$ which is a function of the
errors $\ea_{\pre_t}(x)$ parameterized by a cost function $\pot$
\begin{equation}
L_t \coloneqq \sum_{x \in U} \pot\left(\ea_{\pre_t}(x)\right)
\end{equation}
where $U$ is the universe of observed values $x \in |\pre_t|$.
We can bound the growth of $L_t$ when \coopq and \coopf are used
to construct sequences of summaries.
Then, we can relate $L_t$ and $\max_x |\ea_{\pre_t}(x)|$ to bound the latter.
Omitted proofs in this section can be found in Appendix \ref{sec:freq_pot_proof}.

\subsubsection{\coopf Error}
For frequency summaries, we use the cost function
$\pot(x) = \exp(\alpha x)$ for a parameter $\alpha$.
We minimize a sum of exponentials as a proxy for the maximum error.
Lemma~\ref{lem:coopfreqpot} bounds how much $L_t$ can increase with $t$.
\begin{lemma}
When \coopf constructs a summary with size $s$ for $\DD_t$ the loss satisfies 
$$L_t \leq L_{t-1} + \alpha r |\DD_t|$$
for $\pot(x) = \exp(\alpha x)$ as long as
$0 < \alpha \leq 2\frac{s}{|\DD_{t}|}\frac{r-1}{r^2}$.
\label{lem:coopfreqpot}
\end{lemma}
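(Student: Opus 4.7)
The plan is to bound the loss increment term by term by decomposing the contribution over items $x$. I would first establish the invariant $\ea_{\pre_t}(x) \geq 0$ for all $t,x$, which follows because \coopf records heavy hitters exactly and the min-clause $\delta_m = \min(r h, \ea_{\pre_t}(x))$ prevents any compensation from overshooting into a negative residual. Given the invariant, $\exp(\alpha \ea_{\pre_{t-1}}(x)) \geq 1$ and the Taylor expansions of $\exp$ used below stay in their regions of validity.

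Next I would write
\[
L_t - L_{t-1} = \sum_x \exp(\alpha \ea_{\pre_{t-1}}(x))\bigl[\exp(\alpha \Delta_x) - 1\bigr]
\]
where $\Delta_x := f_{\DD_t}(x) - g_{S_t}(x)$, and split the universe into three groups: local heavy hitters with $f_{\DD_t}(x) \geq h := |\DD_t|/s$, compensations $x_m$ inserted in the while-loop, and items absent from $S_t$. Heavy hitters contribute $\Delta_x=0$. For items outside $S_t$, $\Delta_x = f_{\DD_t}(x) \in [0,h]$; the hypothesis $\alpha \leq 2s(r-1)/(r^2|\DD_t|)$ forces $\alpha h \leq 2(r-1)/r^2 \leq 1/2$, so $e^z - 1 \leq z + z^2$ bounds the per-item contribution by $\alpha f_{\DD_t}(x)(1+\alpha h)\exp(\alpha \ea_{\pre_{t-1}}(x))$. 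For compensations, non-heavy-hitterness forces $f_{\DD_t}(x) < h \leq rh$, so $\Delta_x \leq 0$ in both the saturated branch ($\delta_m = rh$) and the absorbing branch ($\delta_m = \tilde\ea(x)$, where $\tilde\ea(x) = \ea_{\pre_{t-1}}(x)+f_{\DD_t}(x)$); each compensation therefore yields a non-positive contribution, and in the saturated branch the bound $1 - e^{-u} \geq u(1 - u/2)$ quantifies savings of at least $\alpha(r-1)h\exp(\alpha \ea_{\pre_{t-1}}(x_m))$ up to a second-order correction.

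Finally I would pair each unit of uncompensated mass against a share of a compensation slot and verify that the savings dominate the increases up to slack $\alpha r|\DD_t|$. The greedy $\arg\max$ rule over $\tilde\ea$ ensures that compensations carry weights $\exp(\alpha \ea_{\pre_{t-1}}(x_m))$ at least as large as those of any uncompensated item (shifting $\ea_{\pre_{t-1}}$ by at most $h$ cannot distort the ordering by more than a factor of $e^{\alpha h}$), and the $(s-|HH|) \cdot rh$ total compensation budget covers the at most $|\DD_t|$ units of non-HH mass with a factor of $r$ to spare. The threshold $\alpha h \leq 2(r-1)/r^2$ is exactly what allows the quadratic Taylor remainder to be absorbed into the $r-1$ surplus of the linear savings, delivering $L_t - L_{t-1} \leq \alpha r |\DD_t|$.

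The main obstacle will be this last matching step, because the greedy rule ranks items by the post-segment error $\tilde\ea$ while the savings depend on the pre-segment weight $\exp(\alpha \ea_{\pre_{t-1}})$; I expect to need a careful comparison of the two orderings that uses $\alpha h \leq 1/2$ to bound the distortion. A separate micro-argument is also required for the boundary case in which $|HH| = s$ (so $C = \emptyset$), where the bound has to come purely from the fact that every uncompensated item has $f_{\DD_t}(x) < h$, capping the uncompensated mass below $|\DD_t|$.
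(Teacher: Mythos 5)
Your skeleton matches the paper's proof — the non-negativity invariant, the three-way split into heavy hitters, compensated items, and uncompensated items, the Taylor bounds on $\exp$, the use of the greedy ordering, and the mass bound $\sum_{x\notin H}f_{\DD_t}(x)\le(s-|H|)h$ are all there. But the decisive step, the matching between uncompensated increases and compensation savings, is exactly the part you leave as a plan, and the specific bookkeeping you propose cannot reach the lemma's stated threshold. You weight each item by the pre-segment factor $\exp(\alpha\,\ea_{\pre_{t-1}}(x))$, bound increases by $e^{z}-1\le z+z^{2}$ (a $(1+\alpha h)$ loss), and then pay an additional $e^{\alpha h}$ distortion factor to reconcile the greedy ordering (which sorts by $\tilde\ea=\ea_{\pre_{t-1}}+f_{\DD_t}$) with the pre-segment weights. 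The constant $\alpha\le 2\frac{s}{|\DD_t|}\frac{r-1}{r^2}$ is tight for this lemma under exact accounting: in the configuration where all compensated items saturate with $f_{\DD_t}(x_m)=0$ and the uncompensated non-heavy mass equals $(s-|H|)h$, the required inequality reduces to $\alpha h\le \alpha r h-\alpha^{2}r^{2}h^{2}/2$, with no surplus left over. Any multiplicative slack of $(1+\alpha h)$ or $e^{\alpha h}$ therefore breaks the comparison at the stated $\alpha$ (your own bound in that case would need $1+\alpha h\le e^{-\alpha h}(r-\alpha r^{2}h/2)$, which fails at $\alpha h=2(r-1)/r^{2}$), so your route would only prove the lemma for a strictly smaller $\alpha$, which is not enough for Theorem~\ref{thm:coopf} as stated.

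The missing idea is the reweighting the paper uses: write each term as $\pot(\ea_{\pre_{t-1}}(x)+f_{\DD_t}(x))$ times a bracket, i.e. for uncompensated items $\pot(\ea_{t-1}+\delta)\left[1-\pot(-\delta)\right]$ and for compensated items $\pot(\ea_{t-1}+\delta)\left[\pot(-rh)-\pot(-\delta)\right]$, plus a separate correction $\pot(0)-\pot(\ea_{t-1}+\delta-rh)\le\alpha rh$ for each absorbing-branch item. Then the weight attached to every item is exactly the quantity the greedy rule orders, so the pivot $\pot(\ea_{t-1}(x_\ell)+\delta_\ell)$ upper-bounds all uncompensated weights and lower-bounds all compensated ones with no distortion; the increases need only the first-order bound $1-e^{-z}\le z$; the single quadratic loss sits on $\pot(-rh)$; and the $\delta$-terms cancel between the two sums, yielding the bracket $\alpha h-\alpha rh+\alpha^{2}r^{2}h^{2}/2\le 0$ precisely when $\alpha\le\frac{2}{h}\frac{r-1}{r^{2}}$. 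Note also that the additive slack $\alpha r|\DD_t|$ comes from the absorbing-branch correction ($\le\alpha rh$ per compensation slot), not from the compensation budget "having a factor of $r$ to spare" as you suggest. Your boundary case $|H|=s$ is indeed easy (all mass is then in heavy hitters, so $L_t=L_{t-1}$), but the main matching step needs the rewriting above rather than the ordering-distortion argument you outline.
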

Given this, we can bound the cumulative error:
\begin{theorem}
\coopf maintains
$$
\max_{x \in U} |\ea_{\pre_t}(x)| \leq \frac{1}{\alpha} \ln\left(1+\alpha r \sum_{i=1}^{t} |\DD_i|\right)
$$ 
where $\alpha = 2\frac{s}{\max_i{|\DD_i|}}\frac{r-1}{r^2}$.
\label{thm:coopf}
\end{theorem}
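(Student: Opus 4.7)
The plan is to iterate the one-step bound from Lemma~\ref{lem:coopfreqpot} telescopically to obtain a bound on $L_t$, and then invert the monotone function $\pot$ to convert this into a bound on $\max_x \ea_{\pre_t}(x)$. The key leverage is that $\pot(x) = \exp(\alpha x)$ grows strictly faster than linearly, so a single summed potential $L_t$ already controls the largest individual entry by a simple ``max is at most the sum'' argument.

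\textbf{Steps.} First, I would unroll the recurrence: applying Lemma~\ref{lem:coopfreqpot} at each of the $t$ segments produces
$$L_t \leq L_0 + \alpha r \sum_{i=1}^{t} |\DD_i|.$$
Before any data arrives, $\ea_{\pre_0}(x) = 0$ for every $x$, so with the standard convention that unseen items contribute nothing (equivalently, taking $\pot(x) = \exp(\alpha x) - 1$ in the potential, or restricting $U$ to observed items), we have $L_0 = 0$. Second, I would invoke the invariant $\ea_{\pre_t}(x) \geq 0$ maintained by Algorithm~\ref{alg:greedy_freq} (guaranteed because each compensating decrement is capped at $\ea_{\pre_t}(x)$). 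Under this invariant $|\ea_{\pre_t}(x)| = \ea_{\pre_t}(x)$, and dropping all but the largest term from the sum defining $L_t$ yields
$$\exp\!\left(\alpha \max_{x \in U} \ea_{\pre_t}(x)\right) - 1 \;\leq\; L_t \;\leq\; \alpha r \sum_{i=1}^{t} |\DD_i|.$$
Third, applying $\tfrac{1}{\alpha}\ln(1+\cdot)$ to both sides gives exactly the claimed bound. Finally, I would verify that the stated $\alpha = 2\frac{s}{\max_i|\DD_i|}\cdot\frac{r-1}{r^2}$ satisfies the hypothesis $\alpha \leq 2\frac{s}{|\DD_t|}\cdot\frac{r-1}{r^2}$ of Lemma~\ref{lem:coopfreqpot} uniformly in $t$, which is immediate from taking the max in the denominator.

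\textbf{Main obstacle.} The telescoping and inversion are routine once Lemma~\ref{lem:coopfreqpot} is in hand, so the real subtlety is bookkeeping around the universe $U$: new values appear dynamically as segments are ingested, and a naive reading of $\pot(x)=\exp(\alpha x)$ would assign weight $1$ to every unseen item and blow up $L_0$. I would resolve this by shifting the potential (or equivalently summing only over observed items) so that a newly observed value enters with error $0$ and contributes nothing; one must then check that this shift is compatible with the per-step increment claimed in Lemma~\ref{lem:coopfreqpot}, which it is, because newly appearing items only increase $\ea_{\pre_t}$ by amounts already accounted for in $|\DD_t|$. A secondary concern is that the heavy-hitter branch of Algorithm~\ref{alg:greedy_freq} may fill $S_t$ before the compensation loop runs; here one has to argue that heavy hitters are covered exactly, so they contribute $0$ to $\ea_{\pre_t}$ and do not interfere with the loss bound.
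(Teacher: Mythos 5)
Your proposal is correct and follows essentially the same route as the paper's own (very terse) proof: verify the stated $\alpha$ satisfies the hypothesis of Lemma~\ref{lem:coopfreqpot}, telescope it to bound $L_t$, and invert the exponential potential using the nonnegativity of $\ea_{\pre_t}(x)$ to bound the maximum. Your extra bookkeeping about $L_0$ and the growing universe is exactly where the paper's ``$-1$'' term comes from, so nothing is missing.
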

\begin{proof}
$\alpha$ satisfies the conditions in Lemma~\ref{lem:coopfreqpot} so
$L_t \leq L_0 + \alpha r \sum_{i=0}^t |\DD_i|$ and thus
\begin{align*}
\exp(\alpha \max_{x \in U} \ea_t(x))-1 \leq \alpha r \sum_{i=0}^t |\DD_i|
\end{align*}
\end{proof}
To illustrate the asymptotic behavior we can apply Theorem~\ref{thm:coopf} 
with $r=\frac{3}{2}$ and consistent segment weights $n=|\DD_i|$ to see 
in Corrollarry~\ref{cor:freq_concrete} that the absolute error grows logarithmically with the number of segments $k$ in the interval.
\begin{cor}
For $r=\frac{3}{2}$ and $|\DD_i|=n$, \coopf maintains
$$
\max_{x \in U} |\ea_k(x)| \leq \frac{9}{4}\frac{n}{s} \ln\left(1+\frac{2}{3}nk\right)
$$ 
\label{cor:freq_concrete}
\end{cor}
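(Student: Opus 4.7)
The proof is a direct specialization of Theorem~\ref{thm:coopf}---no new machinery is required. The plan is to instantiate the general bound with $r=\tfrac{3}{2}$ and the uniform-weight assumption $|\DD_i|=n$, then simplify the resulting constants.

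First I would compute $\alpha$. With $r=\tfrac{3}{2}$ we have $r-1=\tfrac{1}{2}$ and $r^{2}=\tfrac{9}{4}$, so $(r-1)/r^{2}=\tfrac{2}{9}$. Since $\max_i|\DD_i|=n$, the formula $\alpha=2(s/\max_i|\DD_i|)(r-1)/r^{2}$ from Theorem~\ref{thm:coopf} gives $\alpha=\tfrac{4s}{9n}$, and therefore $1/\alpha=\tfrac{9n}{4s}$, which is exactly the prefactor appearing on the right-hand side of the corollary.

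Next I would evaluate the argument of the logarithm. Summing the uniform segment weights gives $\sum_{i=1}^{k}|\DD_i|=nk$, and $\alpha r=\tfrac{4s}{9n}\cdot\tfrac{3}{2}=\tfrac{2s}{3n}$. Multiplying collapses $\alpha r\sum_{i=1}^{k}|\DD_i|$ to a simple constant multiple of $k$, which, after substitution into Theorem~\ref{thm:coopf}, yields the stated bound up to a final rearrangement of numerical constants.

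There is essentially no obstacle. The only point to verify is that the chosen $\alpha$ obeys the precondition $0<\alpha\le 2(s/|\DD_t|)(r-1)/r^{2}$ demanded by Lemma~\ref{lem:coopfreqpot} at every step; but we have picked $\alpha$ to be exactly the largest admissible value and all segments are of uniform size $n$, so this holds with equality for every $t$ and the hypothesis of Theorem~\ref{thm:coopf} is automatically satisfied. The entire corollary is therefore a short bookkeeping calculation on top of Theorem~\ref{thm:coopf}.
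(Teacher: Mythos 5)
Your approach is exactly the paper's: Corollary~\ref{cor:freq_concrete} is presented as a direct instantiation of Theorem~\ref{thm:coopf} with $r=\tfrac{3}{2}$ and $|\DD_i|=n$, and your computations of $\alpha=\tfrac{4s}{9n}$, $1/\alpha=\tfrac{9}{4}\tfrac{n}{s}$, and $\alpha r=\tfrac{2s}{3n}$ are all correct, as is the observation that the admissibility condition of Lemma~\ref{lem:coopfreqpot} holds for every segment since all $|\DD_t|=n$.

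The one place you wave your hands is exactly where the printed statement and the direct substitution disagree. Multiplying out gives $\alpha r\sum_{i=1}^{k}|\DD_i|=\tfrac{2s}{3n}\cdot nk=\tfrac{2}{3}sk$, so the specialization of Theorem~\ref{thm:coopf} is $\max_{x\in U}|\ea_k(x)|\leq \tfrac{9}{4}\tfrac{n}{s}\ln\left(1+\tfrac{2}{3}sk\right)$, whereas the corollary as stated has $\tfrac{2}{3}nk$ inside the logarithm. This is not ``a final rearrangement of numerical constants'': $s$ and $n$ are distinct parameters. To land on the statement as printed you need the additional (trivial, but necessary) remark that the summary size does not exceed the segment weight, $s\leq n$, so that by monotonicity $\ln\left(1+\tfrac{2}{3}sk\right)\leq \ln\left(1+\tfrac{2}{3}nk\right)$ --- or, alternatively, you should flag that the corollary's $nk$ is a looser form (quite possibly a typo for $sk$) and that your direct calculation in fact yields the sharper bound. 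With that one sentence added, your proof is complete and coincides with the paper's.
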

In fact this result is close to optimal: an adversary generating incoming data can guarantee at least $\Omega(\log{k})$ error accumulation by generating data containing items the summaries have undercounted the most so far.

\subsubsection{\coopq Error}
For rank queries we use the cost function $\pot(x) = \cosh{\alpha x}$.
Since $\cosh{z} = \frac{1}{2}(\exp(z) + \exp(-z))$ this exponentially 
penalizes both under and over-estimates symmetrically, and is thus a smooth
proxy for the maximum absolute error of the error, also used in discrepancy
theory \cite{spencer1977balancing}.
As with \coopf, Lemma~\ref{lem:quant_cosh} bounds how much $L_t$ can increase with $t$.
\begin{lemma}
When \coopq constructs a summary with size $s$ for $\DD_t$ the loss function satisfies
$$
L_t \leq L_{t-1} \exp{\alpha^2(|\DD_t|/s)^2/2}
$$
for $\pot(x) = \cosh(\alpha x)$
\label{lem:quant_cosh}
\end{lemma}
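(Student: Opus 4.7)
The plan is to prove the lemma in three steps: decompose the loss change by chunk, compare the greedy choice to a uniformly random one, and then apply a Hoeffding-style moment generating function bound.

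First, I would establish a chunk-level decoupling. For any threshold $y$, let $K(y) = |\{v \in \DD_t : v \le y\}|$. If $K(y)$ is a multiple of $h := |\DD_t|/s$, then $r_{\DD_t}(y)$ equals exactly $h$ times the number of chunks entirely below $y$, which is canceled by the $-h$ subtractions from those chunks' representatives, so $\ea_{\pre_t}(y) = \ea_{\pre_{t-1}}(y)$. Otherwise $K(y) = (j-1)h + k$ for a unique chunk $j$ and $k \in \{1,\dots,h-1\}$, and a direct calculation yields $\ea_{\pre_t}(y) = \ea_{\pre_{t-1}}(y) + \Delta_k(m_j)$ where $\Delta_k(m) := k - h\cdot 1_{m \le k}$ and $m_j \in \{1,\dots,h\}$ is the index of the representative \coopq picks within chunk $j$. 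Since this increment depends only on $m_j$, the loss change splits additively across chunks: $L_t - L_{t-1} = \sum_{j=1}^{s} g_j(m_j)$, where $g_j$ is a function of $m_j$ alone. The sequential greedy in Algorithm~\ref{alg:greedy_quant} is equivalent to minimizing each $g_j$ separately, because previous chunks' representatives contribute a constant offset (namely $+(j-1)h$ from $r_{\DD_t}$ and $-(j-1)h$ from their $-h$ subtractions) to the intermediate error at in-chunk-$j$ thresholds.

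Next, since each $g_j$ depends on a separate $m_j$, the greedy choice satisfies $\min_m g_j(m) \le E_{m \sim \mathrm{Unif}(1,\dots,h)}[g_j(m)]$, so it suffices to bound the expected loss change under uniformly random representatives. Under uniform $m$, one readily verifies that $\Delta_k$ has mean zero (since $P(m \le k) = k/h$) and is supported on the length-$h$ interval $[k-h, k]$. The classical Hoeffding MGF bound then yields $E[\exp(\pm\alpha\Delta_k)] \le \exp(\alpha^2 h^2/8)$, and writing $\cosh$ as the average of the two exponentials gives
\begin{equation*}
E\bigl[\cosh(\alpha(e+\Delta_k))\bigr] \le \cosh(\alpha e)\exp(\alpha^2 h^2/8)
\end{equation*}
for any real $e$.

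Summing this bound over all $y$ affected by \coopq (grouped by chunk $j$ and position $k$) and noting that the corresponding $\cosh(\alpha\ea_{\pre_{t-1}}(y))$ terms sum to at most $L_{t-1}$, I obtain $L_t \le L_{t-1}\exp(\alpha^2 h^2/8) \le L_{t-1}\exp(\alpha^2 h^2/2)$, matching the stated bound (the factor $1/2$ in the exponent absorbs some slack). The main obstacle is the chunk-level decoupling in the first step: although each representative subtracts $h$ from $\ea(y)$ over all $y$ above the representative's value, one must track how those subtractions exactly cancel the $r_{\DD_t}$ additions for out-of-chunk thresholds, leaving nontrivial effects only within each chunk. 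Once this decoupling is in hand, the remainder is a standard discrepancy-style concentration calculation.
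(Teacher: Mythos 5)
Your proposal is correct and follows essentially the same route as the paper's proof: decouple the loss change chunk by chunk (the proxy weight $h$ makes a representative's effect cancel at every threshold outside its own chunk), bound the greedy minimum by the frequency-weighted (equivalently, uniform-over-positions) random choice of representative, and control the resulting mean-zero, two-valued increment of range $h$ at each threshold. The only difference is in the final elementary step, where the paper proves a bespoke two-point $\cosh$ inequality via the angle-addition formula and Taylor bounds while you invoke Hoeffding's lemma, which even yields the slightly sharper per-segment factor $\exp(\alpha^2 h^2/8) \leq \exp(\alpha^2 h^2/2)$.
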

From Lemma~\ref{lem:quant_cosh} we can bound the maximum rank error.
\begin{theorem}
\coopq maintains
$$
\max_{x \in U} |\ea_{\pre_t}(x)| \leq \frac{1+ 2\ln{(2|U|)}}{2s}\sqrt{\sum_{i}^{t} |\DD|_t^2}
$$
with $\pot(x) = \cosh(\alpha x)$ and
$\alpha = s\left(\sum_{i=0}^{t} |\DD_i|^2\right)^{-1/2}$.
\end{theorem}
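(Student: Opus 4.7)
The plan is to follow the same template as the \coopf analysis: first iterate Lemma~\ref{lem:quant_cosh} to obtain a bound on the loss $L_t$, then invert the inequality $\pot(\alpha \ea_{\pre_t}(x)) \leq L_t$ to extract a bound on $\max_x |\ea_{\pre_t}(x)|$, and finally choose $\alpha$ to balance the two terms that appear.

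First, I would iterate Lemma~\ref{lem:quant_cosh} starting from the initial loss. Since $\ea_{\pre_0}(x) = 0$ for all $x$ and $\cosh(0) = 1$, we have $L_0 = |U|$. Telescoping the multiplicative bound gives
\begin{equation*}
L_t \;\leq\; |U| \exp\!\left(\frac{\alpha^2}{2s^2} \sum_{i=1}^{t} |\DD_i|^2\right).
\end{equation*}

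Next, I would lower bound a single term of $L_t$ by the maximum error. Because $\cosh(z) \geq \tfrac{1}{2}\exp(|z|)$, for any particular $x^\star = \arg\max_x|\ea_{\pre_t}(x)|$ we have $\tfrac{1}{2}\exp(\alpha \max_x |\ea_{\pre_t}(x)|) \leq \cosh(\alpha \ea_{\pre_t}(x^\star)) \leq L_t$. Taking logarithms and rearranging yields
\begin{equation*}
\max_{x \in U} |\ea_{\pre_t}(x)| \;\leq\; \frac{\ln(2|U|)}{\alpha} \;+\; \frac{\alpha}{2s^2}\sum_{i=1}^{t} |\DD_i|^2.
\end{equation*}

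Finally, I would substitute $\alpha = s\bigl(\sum_{i=0}^{t} |\DD_i|^2\bigr)^{-1/2}$ as in the theorem statement. The first term becomes $\frac{\ln(2|U|)}{s}\sqrt{\sum_i |\DD_i|^2}$ and the second becomes $\frac{1}{2s}\sqrt{\sum_i |\DD_i|^2}$, and adding them gives exactly the claimed $\frac{1+2\ln(2|U|)}{2s}\sqrt{\sum_i |\DD_i|^2}$. No serious obstacle arises once Lemma~\ref{lem:quant_cosh} is in hand; the only subtlety is verifying that this particular $\alpha$ is indeed (up to a constant) the minimizer of the sum of the two terms, which reduces to the standard AM-GM balancing used throughout discrepancy-style arguments. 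The real technical weight of the argument sits in Lemma~\ref{lem:quant_cosh} itself (proved in the appendix), and this theorem is essentially its corollary.
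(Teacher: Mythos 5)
Your proposal is correct and follows exactly the same route as the paper's own proof: iterate Lemma~\ref{lem:quant_cosh} to get $L_t \leq |U|\exp\bigl(\tfrac{\alpha^2}{2s^2}\sum_i |\DD_i|^2\bigr)$, invert via $\cosh(z)\geq\tfrac{1}{2}e^{|z|}$ to obtain $\max_x|\ea_{\pre_t}(x)| \leq \tfrac{1}{\alpha}\ln(2|U|)+\tfrac{\alpha}{2s^2}\sum_i|\DD_i|^2$, and substitute the stated $\alpha$. You simply spell out details the paper leaves implicit ($L_0=|U|$ and the $\cosh$ lower bound), so there is nothing to correct.
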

\begin{proof}
Using Lemma~\ref{lem:quant_cosh}:
\begin{align*}
L_t &\leq L_0 \exp{\left(\alpha^2/2\sum_{i=0}^{t}\left(\frac{|\DD_i|}{s}\right)^2\right)} \\
\max_{x \in U} |\ea_t(x)| &\leq \frac{1}{\alpha}\ln(2|U|) + \frac{\alpha}{2} \sum_{i=0}^{t}\left(\frac{|\DD_i|}{s}\right)^2
\end{align*}
Then setting $\alpha = s/{\sqrt{\sum_{i=0}^{t} |\DD_i|^2}}$
completes the proof.
\end{proof}
This can be instantiated for data segments with constant total weight in Corrollary~\ref{cor:quant_concrete_imp}, which shows that \coopq has error $O(\sqrt{k}/s)$.
\begin{cor}
For $|\DD_t|=n$ constant and $\pot(x) = \cosh(\alpha x)$ 
with $\alpha = \frac{s}{n\sqrt{k}}$,
\coopq maintains
$$
\max_{i \in U} |\ea_{\pre_k}(i)| \leq \frac{n}{2s}\left(\sqrt{k}+2\ln{(2|U|)}\right)
$$ 
\label{cor:quant_concrete_imp}
\end{cor}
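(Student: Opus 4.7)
The plan is to obtain the corollary as a direct specialization of the preceding theorem for \coopq to the equal-weight case, carrying out the $\alpha$ substitution explicitly.

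First I would set $|\DD_i|=n$ for every $i=1,\ldots,k$ in the theorem, so that $\sum_{i=1}^{k}|\DD_i|^2=kn^2$ and $\sqrt{\sum|\DD_i|^2}=n\sqrt{k}$. The theorem's prescribed choice $\alpha=s\big/\sqrt{\sum|\DD_i|^2}$ then collapses exactly to $\alpha=s/(n\sqrt{k})$, matching the value declared in the corollary, so the hypotheses of the theorem are met without modification.

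Second, rather than plug into the theorem's final closed form (which would yield $\frac{n\sqrt{k}}{2s}(1+2\ln(2|U|))$, placing an undesired $\sqrt{k}$ factor on the log term), I would re-enter the proof at the intermediate inequality $\max_{x}|\ea_{\pre_k}(x)| \leq \frac{1}{\alpha}\ln(2|U|) + \frac{\alpha}{2s^2}\sum_i|\DD_i|^2$. Iterating Lemma~\ref{lem:quant_cosh} with $|\DD_t|=n$ and $\alpha=s/(n\sqrt{k})$ gives $L_k\leq |U|\exp(1/2)$, and the first term $\frac{1}{\alpha}\ln(2|U|)$ must be reworked so that the $n\sqrt{k}/s$ scaling of $1/\alpha$ does not multiply $\ln(2|U|)$ directly. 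I would achieve this by splitting the upper bound on $L_k$ into the universe baseline $L_0=|U|$ (which contributes only $\ln(2|U|)$ once $\cosh^{-1}$ is inverted) and the exponential drift factor $\exp(\alpha^2 kn^2/(2s^2))$ (whose $\alpha^2kn^2/(2s^2)=1/2$ contributes additively to $\alpha\ea^*$). Concretely, using $\cosh(\alpha\ea^*)\leq L_k - (|U|-1)$ together with $\cosh\geq 1$ on the other universe elements decouples the two sources; after applying $\cosh^{-1}(y)\leq \ln(2y)$ to the residual and dividing through by $\alpha=s/(n\sqrt{k})$, the $\ln(2|U|)$ contribution loses its $\sqrt{k}$ multiplier while the algorithmic drift contributes $\frac{n\sqrt{k}}{2s}$. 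Adding the two yields the stated $\frac{n}{2s}(\sqrt{k}+2\ln(2|U|))$.

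The main obstacle is precisely this decoupling step: the theorem's proof as written binds the $\ln(2|U|)$ and $\sqrt{k}$ contributions together multiplicatively through the single $\cosh^{-1}$ application, and teasing them apart requires tracking the $|U|$ baseline and the exponential drift through the inversion separately. Once that separation is established, everything else is routine algebraic substitution, so I expect the bulk of the proof to be short and the remainder to follow immediately from the theorem's machinery.
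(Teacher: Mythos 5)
You are right that direct substitution of $|\DD_i|=n$ into the preceding theorem gives $\frac{n\sqrt{k}}{2s}\left(1+2\ln(2|U|)\right)$, i.e.\ a $\sqrt{k}$ factor multiplying the logarithmic term, which does not match the corollary as printed. The paper offers no separate argument: the corollary is presented as a plain instantiation of the theorem (``This can be instantiated for data segments with constant total weight\dots''), so the paper's own route is exactly the one you rejected, and the printed bound $\frac{n}{2s}(\sqrt{k}+2\ln(2|U|))$ appears to be a transcription slip for $\frac{n\sqrt{k}}{2s}(1+2\ln(2|U|))$.

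The difficulty is that your proposed repair does not actually work. From $L_k\leq L_0\exp\left(\alpha^2kn^2/(2s^2)\right)=|U|e^{1/2}$ and $\pot\geq 1$ on the remaining universe elements you obtain $\cosh(\alpha\ea^*)\leq L_k-(|U|-1)\leq |U|(e^{1/2}-1)+1$, hence $\alpha\ea^*\leq\ln\bigl(2|U|(e^{1/2}-1)+2\bigr)$. The right-hand side is still $\Theta(\ln|U|)$ --- subtracting the baseline $|U|-1$ only shrinks the argument of the logarithm by a constant factor, roughly from $|U|e^{1/2}$ to $0.65\,|U|$ --- and you must still divide through by $\alpha=s/(n\sqrt{k})$, which reattaches the $n\sqrt{k}/s$ multiplier to the log term. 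Nor can a different $\alpha$ rescue the claim: the intermediate bound has the form $\frac{1}{\alpha}\ln(2|U|)+\frac{\alpha kn^2}{2s^2}$, which by the AM--GM inequality is at least $\frac{n}{s}\sqrt{2k\ln(2|U|)}$, and for large $k$ this exceeds the corollary's stated $\frac{n}{2s}(\sqrt{k}+2\ln(2|U|))$. So the decoupling you describe is not achievable within the theorem's machinery; the honest conclusion is $\frac{n\sqrt{k}}{2s}(1+2\ln(2|U|))$, which still delivers the $O(n\sqrt{k}/s)$ scaling the paper quotes (treating $|U|$ as fixed).
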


\section{Optimizing Cube Queries}
\label{sec:optcube}
In this section we describe the weighted probability proportional to
size samples (\pps) \cite{cohen2011structure,ting2018disagg} \storyboard uses to
summarize segments for both frequency and rank data cube aggregations. 
These randomized summaries have errors which cancel out with high probability.
Then we describe how \storyboard optimizes the allocation of space and bias to these summaries
to increase average query accuracy further for target cube workloads.

\subsection{\pps Summaries}
\label{sec:pps}
A \pps summary is a weighted random sample that 
includes items with probability proportional to their size or total count
in a data segment $\DD$
\cite{cohen2011structure,ting2018disagg,huang2011optimalsamp}.
Values $x_i$ with true occurence count $\DD(x_i) = \delta_i$ are sampled for inclusion
in the summary $S$ according to Equation~\ref{eqn:ppsweight}
\begin{align}
\label{eqn:ppsweight}
\Pr[x_i \in S] &= \min(1, \delta_i/h) \\
S(x_i) &= \begin{cases}
h & \delta_i \leq h\\
\delta_i  & \delta_i > h
\end{cases}.
\end{align}
For an accuracy parameter $h$, 
heavy hitters that occur more than $h$ times are always sampled with their true
count, while those with count $0 \leq \delta_i \leq h$ are either included with a 
proxy weight of $h$ or excluded from the summary. 
Thus, $S(x_i)$ is an unbiased estimate for $\delta_i$ with maximum local error of $h$.
We will see later that rank estimates 
$\hat{r}_{S}(x) \sum_{x_j \in S} \gamma_j \cdot 1_{x_j \leq x}$
can also bounded by $h$.

Setting $h \leq |\DD|/s$ ensures that the summary will have expected size $s$,
but is conservative.
In Algorithm~\ref{alg:calct} we present the procedure from
\cite{cohen2011structure} we use to set $h$ to minimize error 
while keeping the summary size at most $s$
by excluding the effect of heavy hitters.
\begin{algorithm}
\begin{algorithmic}
\Function{CalcT}{$\DD,s$}
\State $h \gets |\DD|/s$
\State $H \gets \{\}$ \Comment{Local Heavy Hitters}
\While{$\max_{x \in \DD \setminus H} f_{\DD}(x) \geq h$}
\State $x_{\text{max}} \gets \arg\max_{x \in \DD \setminus H} \DD(x)$
\State $H \gets H \cup \{x_{\text{max}}\}$ 
\State $h \gets \frac{\sum_{x \in \DD \setminus H} f_{\DD}(x)}{s-|H|}$
\EndWhile
\State \Return $h$
\EndFunction
\end{algorithmic}
\caption{Calculate minimal $h$ threshold}
\label{alg:calct}
\end{algorithm}

One way to implement \pps is to independently sample items according to Equation~\ref{eqn:ppsweight}, 
but this does not guarantee the summary will store exactly $s$ values.
Instead we use the \texttt{PairAgg} procedure in Algorithm~\ref{alg:pairagg} to transform
sampling probabilities for pairs of items
until we have $s$ or $s-1$ values with probability $1$. 
We can do so in a way that guarantees that the error $\max_x |\ea(x)| \leq h$ and is unbiased with $E[\ea(x)] = 0$ for both frequency and rank queries. See \cite{cohen2011structure} for details.
\begin{algorithm}
\small
\begin{algorithmic}
\Function{PairAgg}{$p_i,p_j$}
\If{$p_i + p_j < 1$}
	\If{$\text{rand}() < p_i/(p_i+p_j)$} $p_i \gets p_i + p_j$; $p_j \gets 0$ 
	\Else\, $p_j \gets p_i + p_j$; $p_i \gets 0$ \EndIf
\Else
	\If{$\text{rand}() < \frac{1-p_j}{2-p_i-p_j}$} $p_i \gets 1$; $p_j \gets p_i+p_j-1$ 
	\Else\, $p_i \gets p_i + p_j-1$; $p_j \gets 1$ \EndIf
\EndIf
\EndFunction
\end{algorithmic}
\caption{Pair Aggregation for \pps}
\label{alg:pairagg}
\end{algorithm}


\subsection{Cube Summary Construction}
\label{sec:cubeworkload}
For data cubes,
storyboard maintains a collection of \pps summaries for data segments $\DD_i$
that form a complete, atomic partition of combinations of dimension values.
A cube query then specifies a set of segments $\QQ_s = \{\DD_1, \dots, \DD_k\}$
that match dimension value filters.
\storyboard ingests a complete cube dataset in batch and is given
a total space budget $S_T$ for storing summaries.
This opens up the opportunity for optimizations across the entire
collection of summaries.

In most multi-dimensional data cubes some queries and dimension values will be much rarer than others.
This makes it wasteful to optimize for worst-case error: even the rarest
data segment would require the same error and space as more representative
segments of the cube.
Thus, in \storyboard we make use of limited space by optimizing for
the average error of queries sampled from a probabilistic workload $W$ specified by
the user.
We show in Section~\ref{sec:cube_lesion} that the workload does not have to be perfectly
specified to achieve accuracy improvements.

We consider a workload $W$ as a distribution 
over possible queries $\QQ_i$ where $\Pr[\QQ_i \sim W] = q_i$.
This is based off of the workloads in STRAT~\cite{chaudhuri2007strat}, 
though STRAT targets only count and sum queries using simple uniform samples.
To limit worst-case accuracy, we can optionally impose a minimium size for each 
segment summary $s_{\text{min}}$ so that the maximum relative error for any query 
is $\er \leq \frac{1}{s_{\text{min}}}$.

\subsection{Minimizing Average Error}
Consider the error incurred by combining summaries over a query $\QQ = \DD_1,\dots,\DD_k$, where the segment summaries $S_i$ have size $s_i$ and represent segments with total count
$|\DD_i| = n_i$. 
Then, based on Equation~\ref{eqn:ppsweight}, the relative error $\er_{\QQ}(x)$ is a random variable that depends on the items selected for inclusion in the \pps summaries.
We will bound the mean squared relative error $E[\er(x)^2]$.

For a single segment $\DD_i$, the \pps summary is unbiased and returns both frequency and rank estimates that lie within a possible range of length $h$. 
Thus, the absolute error satisfies $E[\ea_{\DD_i}(x)] = 0$ and 
$E[\ea_{\DD_i}(x)^2] \leq \frac{1}{4}h^2 \leq \frac{1}{4}n_i^2/s_i^2$, and
since the summaries $S_i$ are independent:
$$
E[\ea_{\QQ}^2] \leq \frac{1}{4}\sum_{\DD_i\in \QQ}\left(\frac{n_i}{s_i}\right)^2.
$$
\minihead{Space Allocation}
Now, we minimize the mean squared relative error (MSRE) for queries drawn from a workload $\QQ_z \sim W$ where $\Pr[\QQ_z] = q_z$. Let $|\QQ_z| = \sum_{\DD_i \in \QQ_{z}}|\DD_i|$.
\begin{align}
E_{\QQ_{z} \sim W}\left[\er_{\QQ}^2\right] 
&\leq \frac{1}{4}\sum_{\DD_i\in \mathcal{\DD}}\frac{n_i^2}{s_i^2}\left(\sum_{z|\DD_{i}\in \QQ_{z}}q_z|\QQ_z|^{-2}\right)
\label{eqn:bigmsebound}
\end{align}
We can solve for the $s_i$ that minimize the RHS of Equation~\ref{eqn:bigmsebound}
under the total space constraint that $\sum_i s_i = S_T$
using Lagrange multipliers.
The optimal $s_i$ are $s_i \propto \alpha_i^{1/3}$ where
\begin{equation}
\alpha_{i}=n_{i}^{2}\sum_{z|\DD_{i}\in \QQ_{z}}q_{z}|\QQ_z|^{-2}
\label{eqn:alphaeq}
\end{equation}
Since we can compute $\alpha_i$ given $W$, this gives us a closed form expression for an allocation of storage space.

\minihead{Bias and Variance}
When estimating item frequencies, we can further reduce error by tuning the bias of \pps summaries to reduce their variance.
Though this does not generalize to quantile queries, the improvements in accuracy for frequency queries can be substantial,
and we have not seen other systems optimize for bias across a collection of summaries.

For example, consider a segment $\DD$ with $n>4$ unique items that each only occur once.
If we summarize the data with an empty summary, estimating $0$ for the count of each item,
we introduce a fixed bias of $1$ but have a deterministic estimator with no variance.
This substantially reduces the error compared to an unbiased \pps estimator constructed on $\DD$ which will have variance $n^2(\frac{1}{n}\cdot (1-\frac{1}{n})) = n(1-\frac{1}{n}) > 3$.

In general, if we have a segment $\DD$ consisting of item weights $\{x_i \mapsto \delta_i\}$
then we bias the frequency estimates 
$\hat{f}_{\DD}(x)$
by subtracting $b$ from the count of every distinct element in $\DD$ before constructing a \pps summary,
and then adding $b$ back to the stored weights.
During \pps construction, $h$ and thus the variance is reduced because $\DD$ has a lower effective total weight $n_i[b]$ given by
\begin{equation}
n_{i}\left[b\right]=\sum_{x_i\in \DD}\left(\delta_i-b\right)^{+}
\label{eqn:biasedn}
\end{equation}
where $(x)^{+}$ is the positive part function $(x)^{+}=\text{max}(x,0)$.

The error for a single segment $\DD_i$ is now bounded by
$\ea_i \leq b_i + \nu_i$ where $b_i$ is the bias and $\nu_i$ is the remaining unbiased \pps error on the bias-adjusted weights, so the
MSRE for a query $\QQ$ is:
\begin{align}
E\left[\er_{\QQ}^{2}\right] & \leq|\QQ|^{-2}\left(\left(\sum_{\DD_i\in \QQ}b_{i}\right)^{2}+\sum_{\DD_i\in \QQ}\frac{1}{4}\left(\frac{n_{i}\left[b_{i}\right]^{2}}{s_{i}^{2}}\right)\right) 
 \label{eqn:biaserror}
\end{align}
Equation~\ref{eqn:biasedn} shows that $n[b]$ is convex with respect to $b$ since it is a sum of convex functions ($\text{max}$ is convex), so the
RHS of Equation~\ref{eqn:biaserror} is convex as well.

\minihead{Recap}
In summary \storyboard does the following for cube aggregations.
\begin{enumerate}
	\item Set summary sizes $s_i \propto \alpha_i^{1/3}$ using Equation~\ref{eqn:alphaeq}, scaled so $\sum s_i = S_T$. 
	\item Solve for biases $\vec{b}$ that minimize the RHS of Equation~\ref{eqn:biaserror} for $\QQ$ a query over the entire dataset.
	\item Construct \pps summaries according to $\vec{s}$ and $\vec{b}$
\end{enumerate}
We optimize Equation~\ref{eqn:biaserror} using the LBFGS-B solver \cite{Byrd1995lbfgs} in SciPy \cite{scipy}.
To simplify computation we optimize $b_i$ for a single aggregation: the whole cube.
An optimal setting of $\vec{b}$ for this whole cube query will not increase relative error over any other query compared to $\vec{b}=0$.
Finding efficient and accurate proxies to optimize 
is a direction for future work.
\section{Evaluation}
\label{sec:eval}
In our evalution, we show that:
\begin{enumerate}
    \setlength\itemsep{.5em}
    \setlength{\topsep}{0em} 
	\item \storyboard's cooperative summaries achieve lower error as interval length increases compared with other summarization techniques: up to $8\times$ for frequencies and $25\times$ for quantiles (Section~\ref{sec:eval-interval}).
    \item \storyboard's space and bias optimizers provide lower average error for cube queries compared with alternative techniques, with reductions between $15$\% to $4.4\times$ (Section~\ref{sec:eval-cube}).
    \item \storyboard's accuracy generalizes across different system and summary parameters, including accumulator size, maximum interval length, and workload specification (Sections \ref{sec:eval_sys_parameters} and \ref{sec:eval_summ_design}).
\end{enumerate}

\subsection{Experimental Setup}
\label{sec:eval-setup}
\minihead{Error Measurement}
Recall from Section~\ref{sec:sys_error} that we are interested in error bounds that
are independent of a specific item or value $x$, so we look at the maximum
error over values $x$,
$\er_{\QQ} \coloneqq \max_x |\er_{\QQ}(x)|$. 
For a large domain of values $U$ it is infeasible to compute $\max_{x\in U}$
so for frequency queries we estimate this maximum over a sample of 200 items drawn
at random from the data excluding duplicates and
for rank queries over a set of 200 equally spaced
values from the global value distribution.
Following common practice for approximate summaries \cite{cormode2010frequent}, 
we rescale the absolute error by the total
size of the queried data to report relative errors $\er_{\QQ} = \ea_{\QQ} / |\QQ|$.

The final query error $\er_{\QQ}^{\text{(A)}}$ is then bounded by the sum of the
summary and accumulator errors $\er_{\QQ} + \er^{\text{(A)}}$.
In our evaluations we assume that the accumulator is large enough to introduce negligible
additional error,
and confirm that the additional error rapidly vanishes as the size of the
accumulator grows in Figure~\ref{fig:linear_acc}.
When summaries provide a native merge routine, we still benchmark query accuracy
by accumulating their estimates rather than merging the summaries directly.
This is strictly more accurate than merging, which will further compress intermediate results
to fit in the original summary space, and provides a more fair comparison with \storyboard.

\minihead{Implementation}
We evaluate a prototype implementation of \storyboard written in Python 
with core summarization and query processing logic compiled to C and
code available\footnote{\scriptsize\url{https://github.com/stanford-futuredata/sketchstore}}.
Since our focus is query accuracy under space constraints, 
our prototype is a single node in-memory system though it can be extended to a distributed
system in the same manner as Druid.

Our implementation of \coopf (Algorithm~\ref{alg:greedy_freq}) uses $r=1$ and sets
$h$ using \texttt{CalcT} in Algorithm~\ref{alg:calct} rather than letting $h = |\DD_t|/s$.
$h\coloneqq\texttt{CalcT}$ gives us better segment accuracy 
and the error bounds still hold under a modified proof.
We implement \coopq (Algorithm~\ref{alg:greedy_quant}) with a cost function parameter $\alpha$
set based on a maximum interval length of $k_T=1024$, and loss $L$ calculated over
the universe of elements seen so far when the full universe is not known ahead of time.

\minihead{Datasets}
We evaluate frequency estimates on
10 million destination ip addresses (\caida) from a Chicago Equinix backbone on 2016-01-21
available from CAIDA \cite{caida}, 
10 million items (\zipf) drawn from a Zipf (Pareto) distribution with parameter $s=1.1$, 
and 10 million records from a production service request log at \msft
with categorical item values for network service provider (\servicep) and OS Build (\serviceo).

We evaluate quantile estimates on 
2 million active power readings (\power) from the UCI Individual household electric power consumption
 dataset \cite{uci},
10 million random values (\uniform) drawn from a continuous uniform $U\sim [0,1]$ distribution,
and 10 million records from the same \msft request log with numeric traffic values (\servicet).

\begin{figure*}
    \centering
    \begin{subfigure}[t]{\textwidth}
    \includegraphics[width=\textwidth]{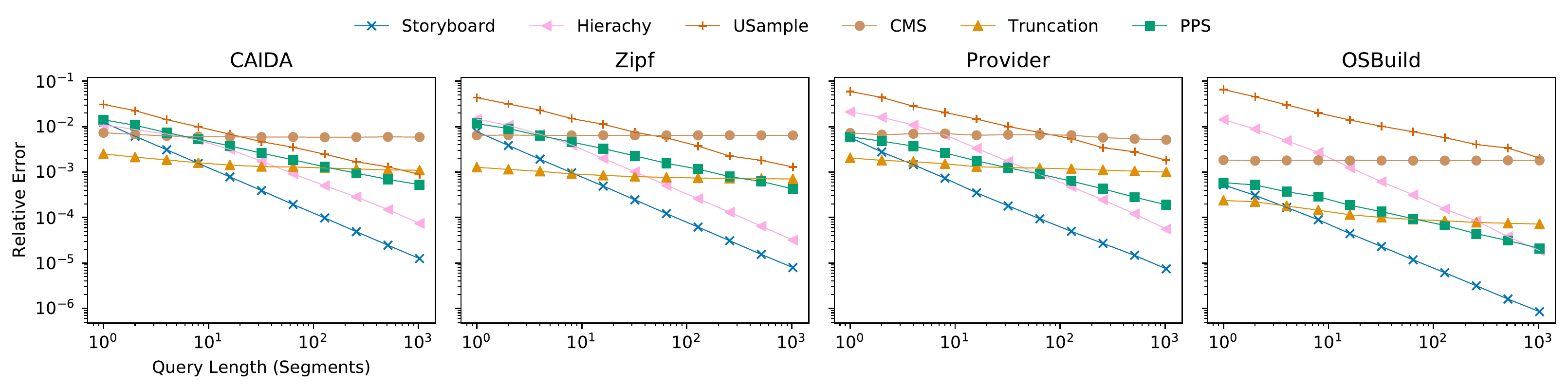}
    \vspace{-1em}
    \caption{Frequency Queries}\label{fig:linear_freq}
    \end{subfigure}
    \begin{subfigure}[t]{.75\textwidth}
    \includegraphics[width=\textwidth]{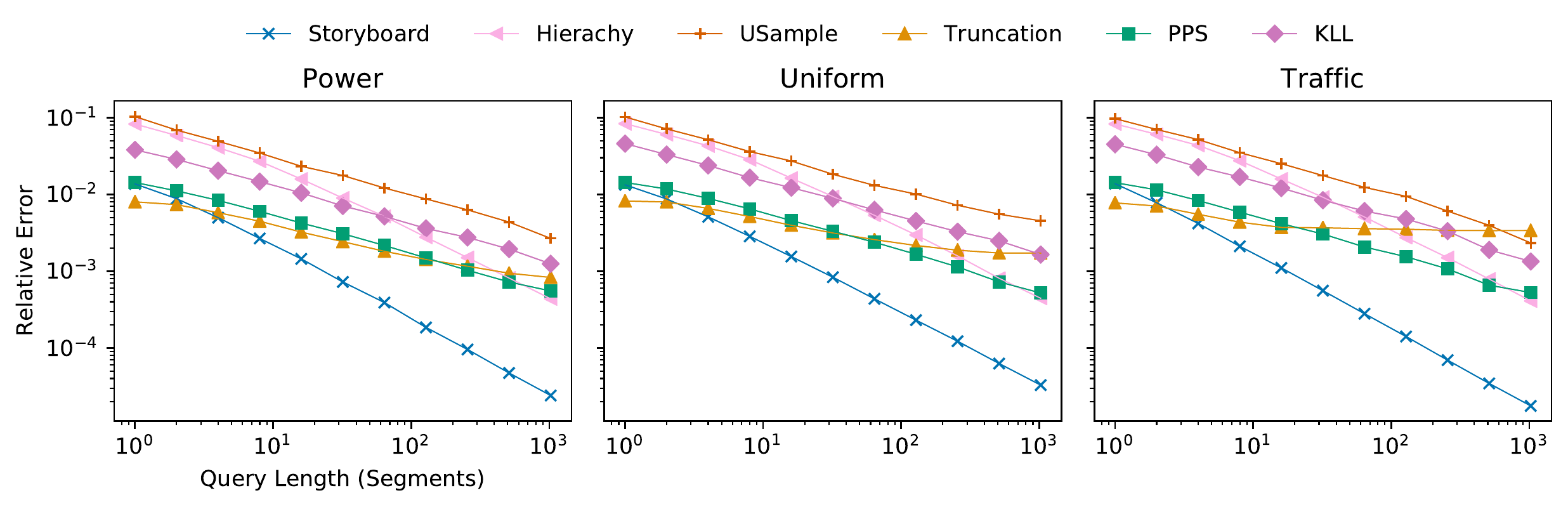}
    \vspace{-1em}
    \caption{Quantile Queries}\label{fig:linear_quant}
    \end{subfigure}
    \caption{Query error over interval queries of different lengths. \storyboard's cooperative summaries have increasingly high accuracy as the query length increases.}
\end{figure*}

\subsection{Overall Query Accuracy}
\minihead{Summarization Methods}
We compare a number of summarization techniques for frequencies and quantiles,
and configure them to match total space usage when comparing accuracy.
For all counter and sample-based summaries including \coopf, \coopq, and \pps, 
we set the number of counters or samples to the same $s$.

We compare against two popular mergeable summaries: 
the optimal quantiles sketch (\kll) from \cite{karnin2016optimalquant}, 
and the Count-Min frequency sketch (\cms) \cite{cormode2005countmin}.
For the count-min sketch we set $d=5$ and let the width $w=s$ parameter represent the space usage.
We also compare against uniform random sampling (\sampling) \cite{cormode2012synopses}, 
and optimal single-segment summaries (\truncation) 
that summarize a segment by storing the exact item counts for the top $s$ items, 
or storing $s$ equally spaced values for quantiles.

For interval queries we also compare with storing \truncation summaries in a hierarchy 
(\dyadic) following \cite{basat2018interval,cormode2019rangeprivacy}.
Specifically, the \dyadic summarization strategy
with base $b$ constructs $h$ layers of summaries.
Summaries in layer $i$ are allocated space $b^{i}\cdot s_0$ to summarize 
aligned intervals of $b^{i}$ segments.
Any query interval of length $k$ can be represented using $b\lceil{\log_{b}{k}}\rceil$ summaries
from different layers.
Since this requires maintaining $h=\log_{b}{k_T}$ layers, to fairly compare total space usage
we scale the space $s_0$ allocated to the lowest layer summaries
by a factor $s_0 = s / \log_{b}{k_T}$.
Unless otherwise stated we use $b=2$, though we will show in Section~\ref{sec:eval_coopdesign}
that the choice does not have a significant impact on accuracy.

For cube queries we also compare with cube AQP techniques that use uniform 
\sampling with different space allocations:
the \samplingp method uses \sampling summaries but allocates space proportional to each segment size
as a baseline random reservoir sample would \cite{vitter1985random}
while the \strat method uses the method in the STRAT AQP system \cite{chaudhuri2007strat}, 
which like \storyboard allocates space to minimize average error.

\subsubsection{Interval Queries}
\label{sec:eval-interval}
We first evaluate \storyboard accuracy on interval queries,
partitioning datasets with associated time or sequence columns 
into $k_T=2048$ size time segments.
Then, we construct summaries with storage size $s=64$.

In Figures \ref{fig:linear_freq} and \ref{fig:linear_quant} 
we show how relative query error $\er_{\QQ}$ varies
with the number of segments $k$ spanned by the interval.
For $k=1,2,4,\dots,1024$ we sample 100 random start and end times
for intervals with length $k$ and plot
the average and standard deviation of the query error.

\storyboard, which uses cooperative summaries, outperforms any system
that uses mergeable summaries or random sampling as $k$ increases.
As the interval length increases merging the 
mergeable summaries (\cms, \kll) maintain their error as expected.
Accumulating \truncation summaries also maintains the same constant error.
\dyadic, \pps, and \sampling are all able to reduce error when
combining multiple summaries, while Cooperative summaries outperform
all alternatives as $k$ exceeds 10 summaries.
We also observe that despite our weaker worst-case bounds for cooperative quantile
summaries, they achieve higher accuracy in practice compared to
alternative methods.
However, \storyboard gives up a constant factor in accuracy when
aggregating less than $10$ summaries compared to alternatives.

\subsubsection{Cube Queries}
\label{sec:eval-cube-setup}
\label{sec:eval-cube}

\begin{figure*}
    \centering
    \begin{subfigure}[t]{.65\textwidth}
    \includegraphics[width=\textwidth]{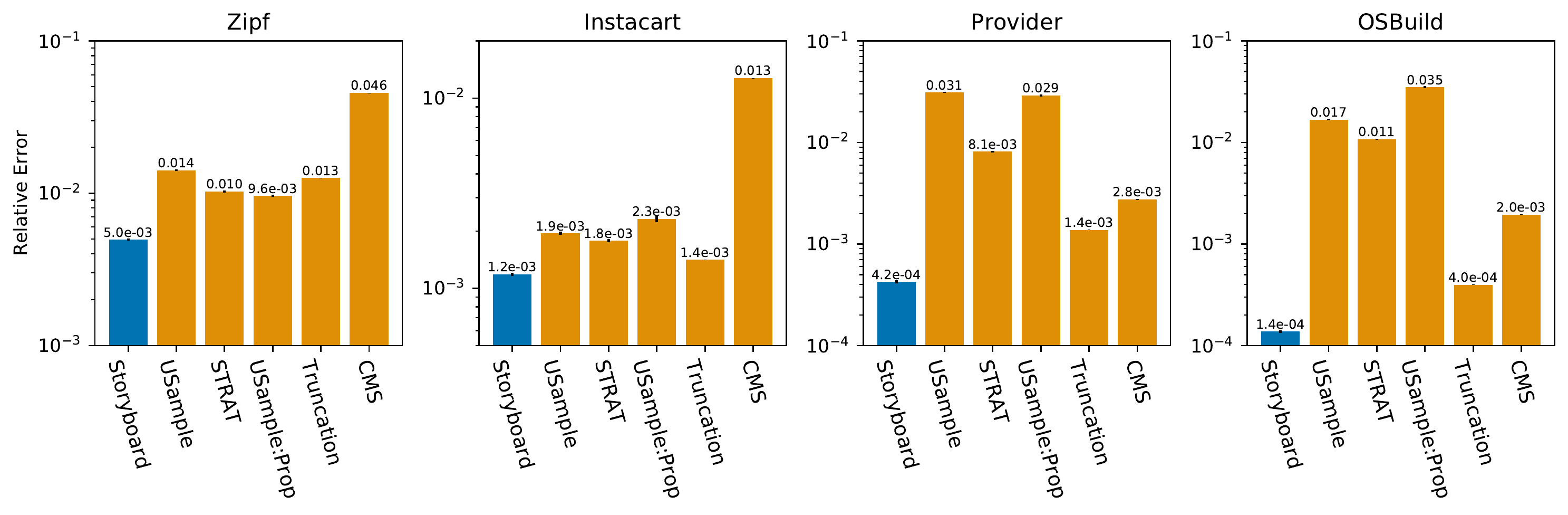}
    \caption{Frequency Queries}\label{fig:cube_freq_global}
    \end{subfigure}
    \begin{subfigure}[t]{.32\textwidth}
    \includegraphics[width=\textwidth]{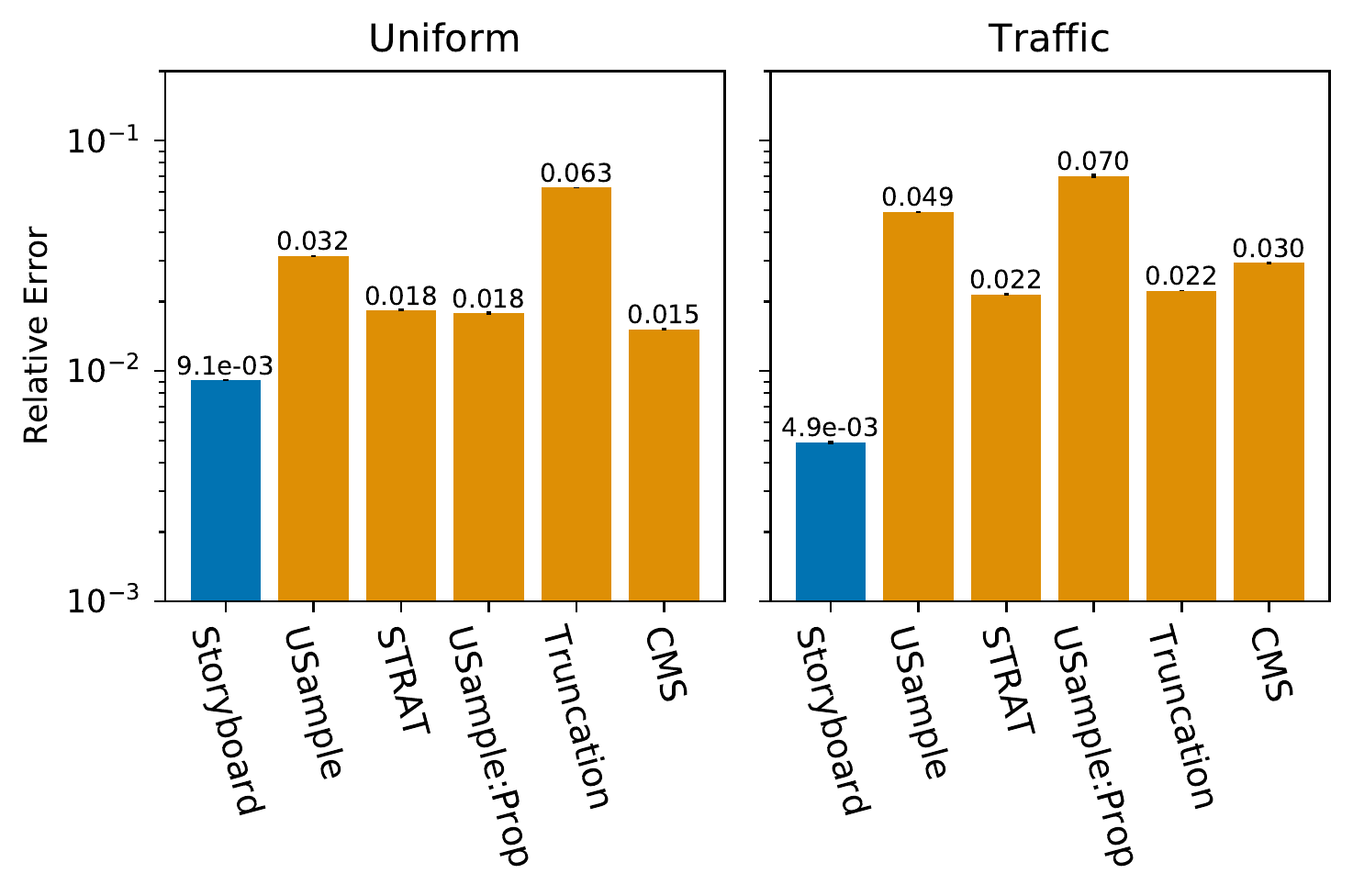}
    \caption{Quantile Queries}\label{fig:cube_quant_global}
    \end{subfigure}
    \caption{Average query error over a workload of cube queries.}
\end{figure*}

\begin{table}
\iftoggle{arxiv}{\small}{\scriptsize}
\centering
\caption{Cube Datasets}
\label{tab:cube_setup}
\begin{tabular}{l|l l l}
\toprule
Data & \# Segments & Summary Space\\
\midrule 
\insta & 10080 & 300000\\
\zipf, \uniform & 10000 & 50000\\
\servicet & 4613 & 50000\\
\serviceo, \servicep & 4613 & 100000 \\
\bottomrule
\end{tabular}
\vspace{-1em}
\end{table}
We evaluate cube queries on our datasets with categorical dimension columns,
and partitioned them along four dimension columns
with parameters summarized in Table~\ref{tab:cube_setup} and
total space limit set to provide roughly consistent query error across the
datasets.
For each of these datasets we evaluate on a default query workload where each
dimension has an independent $p=.2$ probability of being included as a filter, 
and if selected the dimension value is chosen uniformly at random.

In Figures \ref{fig:cube_freq_global} and \ref{fig:cube_quant_global}
 we show the average relative error for frequency and quantile queries
over 10000 random cube queries drawn over the different dataset workloads.
We see that, on average, \storyboard outperforms alternative summarization
techniques that allocate equal space to each segment, as well as 
uniform sampling techniques that optimize sample size allocation.

\subsection{Varying Parameters}
\label{sec:eval_coopdesign}
Now we vary different system and summarization parameters to see their
impact on accuracy, confirming that \storyboard is able to provide
improved accuracy under a variety of conditions.

\subsubsection{System Design}
\label{sec:eval_sys_parameters}
\label{sec:cube_lesion}
The \storyboard system depends on a number of parameters that
go beyond a single summary.
In this section we will show how accuracy varies with the accumulator size $s_A$,
the size of cube aggregations, and the presence of each of the optimizations
storyboard uses for data cubes.

\minihead{Finite Accumulator}
\begin{figure}
    \centering
    \includegraphics[width=\columnwidth]{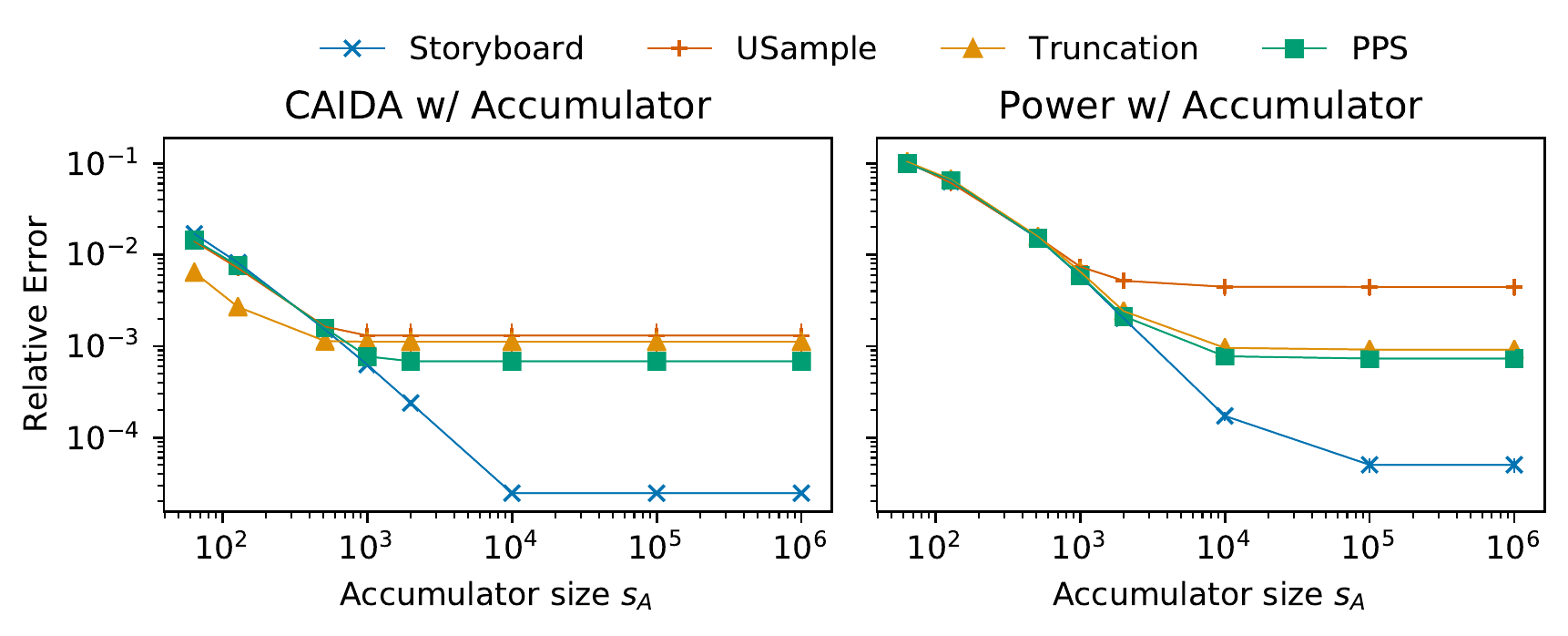}
    \caption{Query error as we vary the size of the accumulator $s_A$. For the large accumulators used in practice there is negligible additional error from the accumulator.}
    \label{fig:linear_acc}
\end{figure}
As described in Section~\ref{sec:sys-queries}, \storyboard accumulates precise
frequency and ranks estimates from the summaries for point queries $\hat{g}(x)$, and 
accumulates summaries into a large accumulator $A$ for quantile and heavy hitters
queries.
In our evaluations thus far we have measured the maximum point query error, which
bounds the maximum quantile or heavy hitter error as well (Section~\ref{sec:sys_error}).
The accumulator $A$ introduces an additional approximation error $\er^{\text{(A)}} = 1/s_A$
which is negligible as $s_A \rightarrow \infty$.

In Figure~\ref{fig:linear_acc} we illustrate how using accumulators of different sizes,
affects final query accuracy on the \power and \caida datasets.
For each accumulator size, we measure the error after accumulating 100 random
interval aggregations spanning $k=512$ segments.
For the accumulators here we use SpaceSaving \cite{metwally2005efficient}
for frequency queries
and a streaming implementation of \pps (\texttt{VarOpt} \cite{cohen2011structure}) for quantiles.
$\er^{\text{(A)}}$ quickly goes to $0$ as $s_A \rightarrow \infty$, so with
at least 10 megabytes of memory available for $s_A$ the additional error
is negligible.

\minihead{Cube Query Spans}
As seen already for interval queries, \storyboard improves the query error for
queries combining results from multiple segment summaries.
We can confirm this for cubes by looking at query error broken down
by the number of dimensions in each query filter condition.
Queries that filter on fewer columns will combine results from more segments.
In Figure~\ref{fig:cube_freq_query} we compare the error for queries that filter on
different numbers of dimensions on the \uniform and \zipf cube workloads.
\storyboard reduces the error for common queries that filter on zero or one dimension.
As a tradeoff \storyboard incurs higher error than other methods for rarer queries with
three or more filters.
For a workload where queries with 3 or more filters are much less common
than queries with 0 or 1 filters, this tradeoff is useful, and
is configurable based on the user specified workload.

\minihead{Cube Optimizer Lesion Study}
In Figure~\ref{fig:cube_freq_lesion} we show how the optimizations \storyboard (\sbopt)
uses for summarizing data cubes all play a role in providing high query accuracy
by removing individual optimizations on the \zipf dataset.
We experiment with removing the size optimizations (\sbopt (-Size)) and bias optimizations (\sbopt (-Bias)),
and try replacing \pps summaries with uniform random samples (\sbopt (-PPS)).
When, size optimization or bias optimization are removed, error increases, and
similarly error increases when \pps summaries are replaced with uniform random samples.

\minihead{Cube Workload Specification}
We also evaluate how \storyboard accuracy depends on precise workload specification
by constructing \storyboard instances configured for incorrectly specified workloads.
Rather than the true $p=0.2$ probability of including a dimension in the cube filter,
we try optimizing cubes for $p=0.05$ in Work1 and $p=0.50$ in Work2.
As included in Figure~\ref{fig:cube_freq_lesion},
in both cases error remains below existing cube construction methods.
Interestingly, accuracy improves for Work1, indicating our optimization is not tight.

\begin{figure}
    \centering
    \includegraphics[width=\columnwidth]{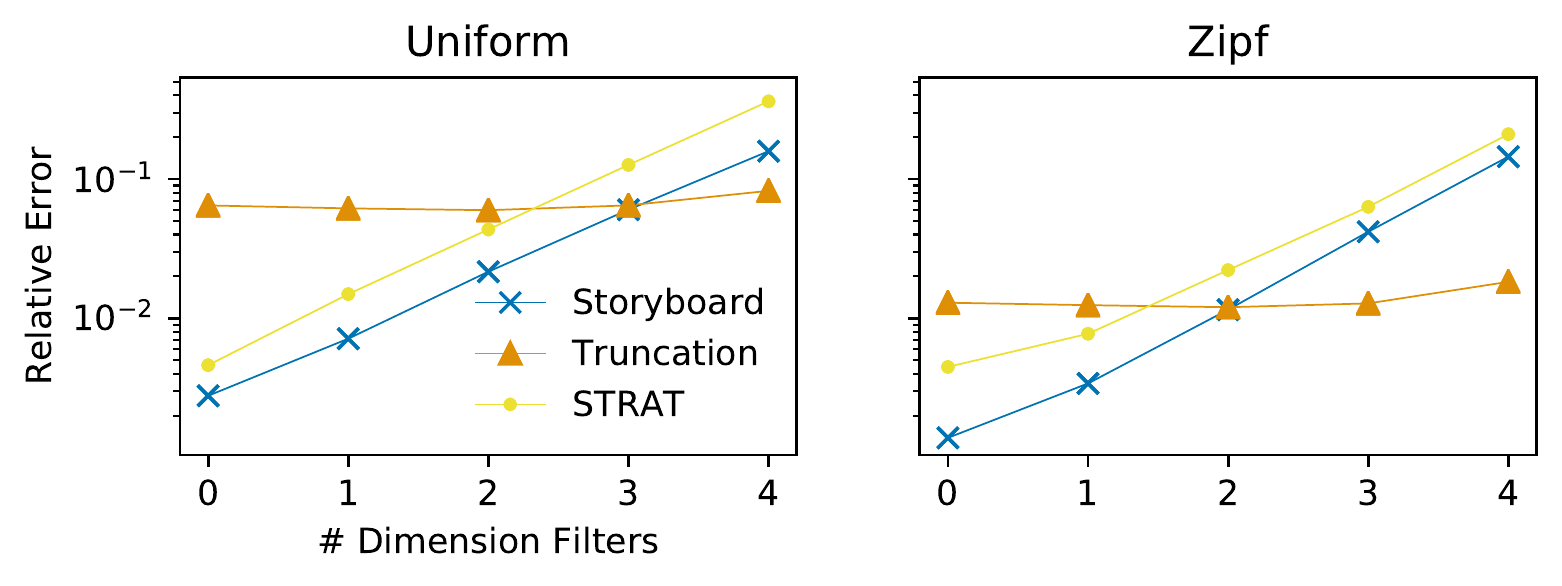}
    \caption{Query error broken down by number of dimension filters in a query. \storyboard achieves lower error on queries that have fewer filters and aggregate more segments.}\label{fig:cube_freq_query}
\end{figure}

\begin{figure}
    \centering
    \includegraphics[width=\columnwidth]{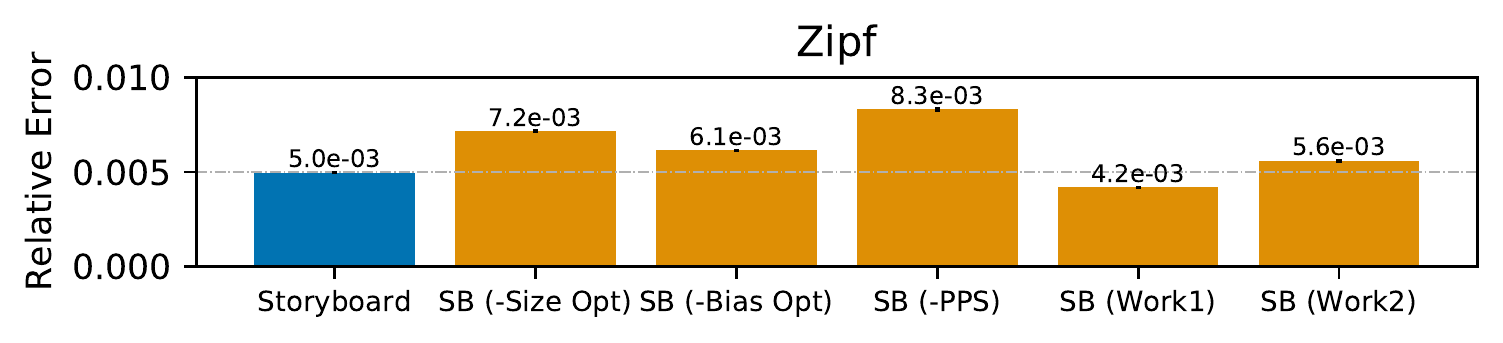}
\caption{Lesion Study on \zipf cube optimizations. Removing any component reduces accuracy, though
adjusting the workload parameter slightly improves accuracy.}
\label{fig:cube_freq_lesion}
\end{figure}

\minihead{Interval Length Specification}
For interval aggregations users specify a maximum expected interval length $k_T$.
In Figure~\ref{fig:linear_freq_lookback} we show the relative error for 20 random queries of length $k=64$ as we vary $k_T$.
All values $k_T\geq 64$ achieve good error and setting $k_T$ much larger does not negatively affect results.
In practice accuracy is also robust to different values of $k_T$ as long as it is
conservatively longer than the expected queries.
\begin{figure}[t]
    \centering
    \includegraphics[width=\columnwidth]{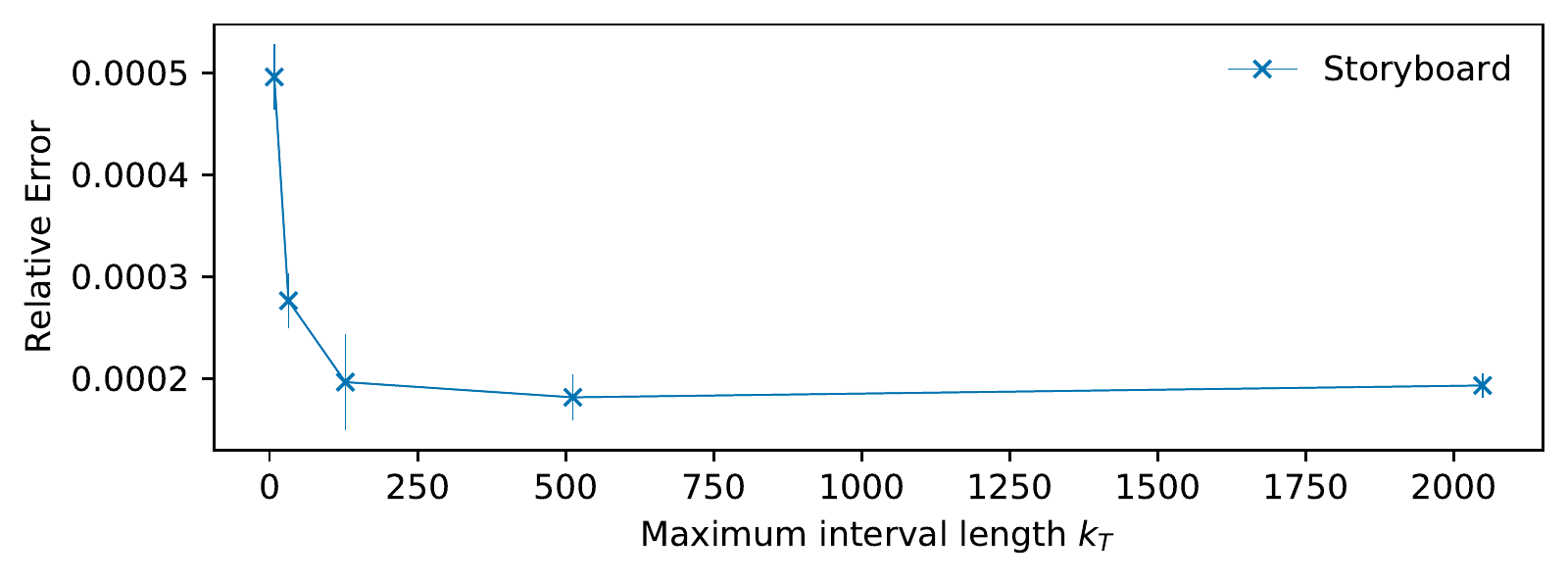}
    \caption{Error as we vary the maximum interval length parameter $k_T$. Overestimating $k_T$ does not significantly change the quality of results. \label{fig:linear_freq_lookback}}
\end{figure}

\subsubsection{Summary Design}
\label{sec:eval_summ_design}
Now we will examine how \storyboard's 
Cooperative and \pps summaries perform as individual segment summaries.
The experiments below are run on the \caida dataset for interval aggregations.

\minihead{Space Scaling}
In Figure~\ref{fig:linear_size} we vary the space available to summaries for different interval lengths,
confirming that like other state of the art summaries and sketches Cooperative and \pps summaries
provide local segment error that
scales inversely proportional to the space given, and maintain their accuracy under a wide range
of summary sizes.

\minihead{Hierarchical base $b$}
Although \dyadic summaries are parameterized by a base $b$, 
in Figure~\ref{fig:linear_freq_dyadic_base} we show that different values for $b$ do not noticeably improve performance.
Although there are improvements in optimizing $b$ when merging small numbers ($k<10$) of summaries,
the difference is less than $10\%$ for larger aggregations.

\begin{figure}[t]
    \centering
    \includegraphics[width=\columnwidth]{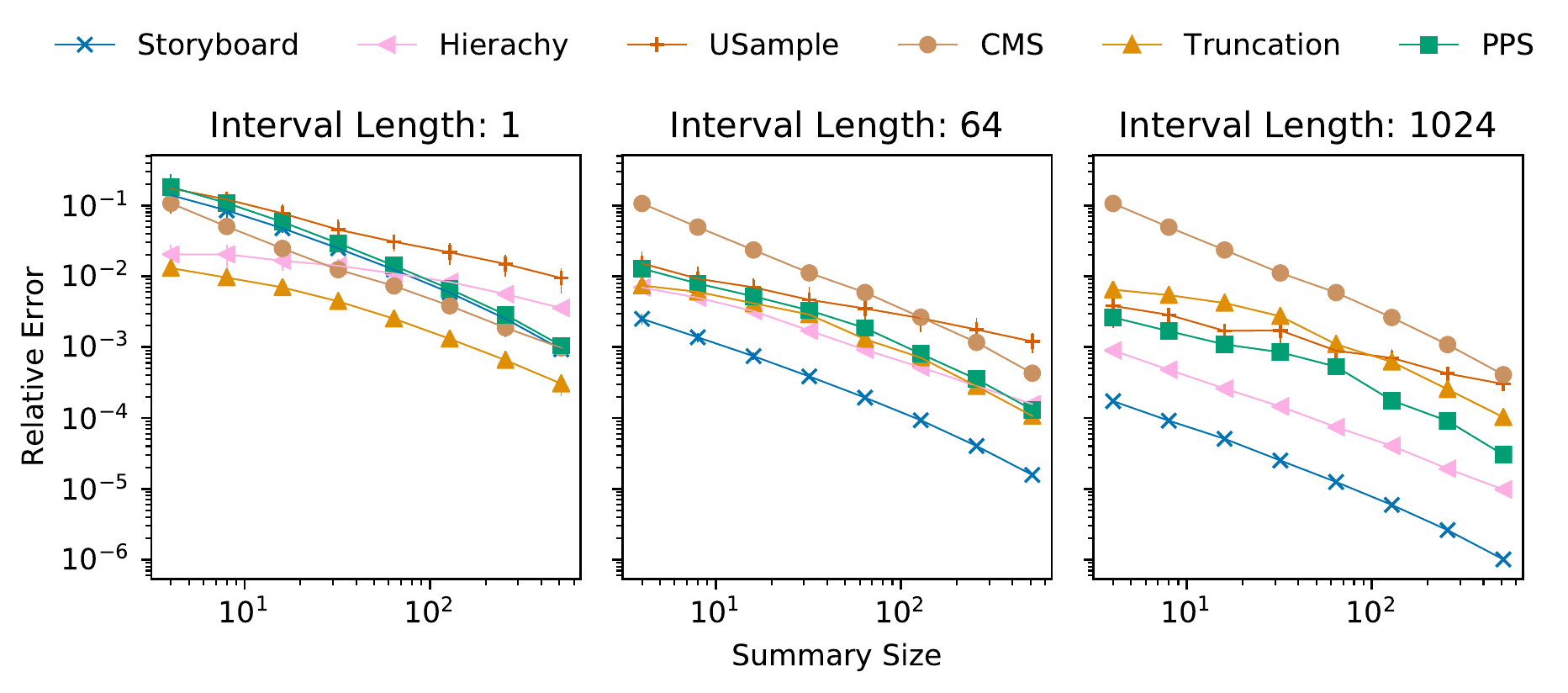}
    \caption{Query error as summary size changes. 
    Cooperative summaries, like state of the art, have error $\er = O(1/s)$}\label{fig:linear_size}
\end{figure}
\begin{figure}[t]
    \centering
    \includegraphics[width=\columnwidth]{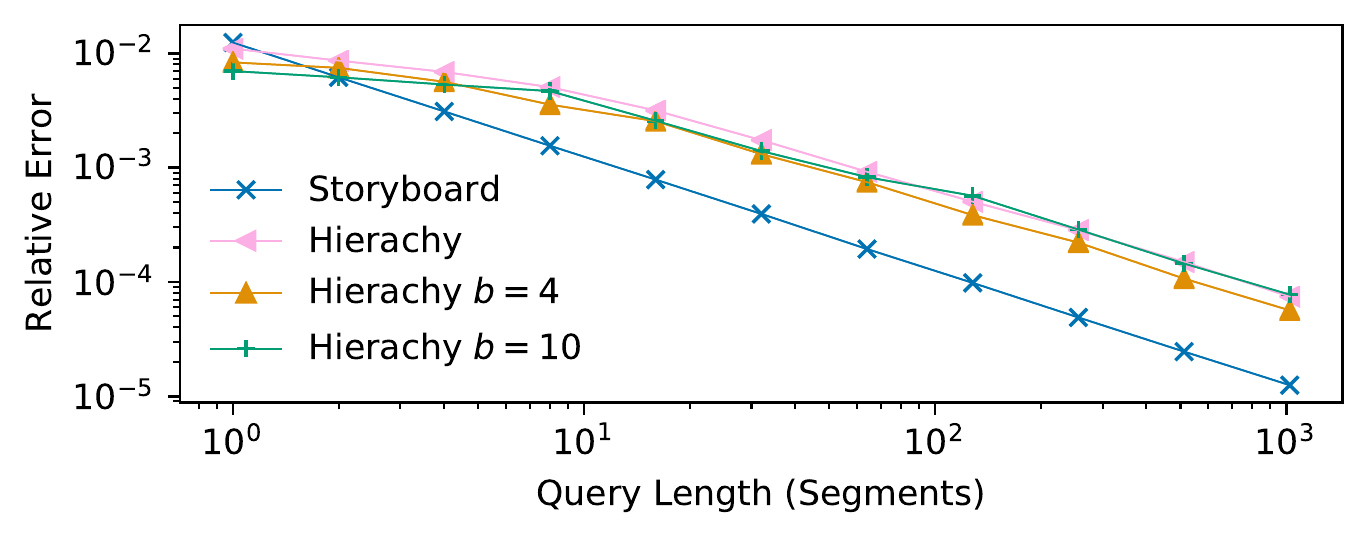}
    \caption{\dyadic summary accuracy for different bases $b$. $b$ does not have a large impact when accumulating across multiple summaries. \label{fig:linear_freq_dyadic_base}}.
    \vspace{-1em}
\end{figure}
\section{Related Work}
\label{sec:related}

\minihead{Precomputing Summaries}
A number of existing approximate query processing (AQP) systems make use of
precomputed approximate data summaries.
An overview of these ``offline'' AQP systems can be found in \cite{Li2018AQPSurvey},
and they are an instance of the \preagg systems described in \cite{peng2018aqp}.
Like data cube systems they materialize partial results \cite{harinarayan1996cubes}, but can support more complex query functions not captured by simple totals.
Another class of systems use ``online'' AQP \cite{Hellerstein1997OLA,Budiu2019hillview,rabkin2014agg} and
provide different latency and accuracy guarantees by computing
approximations at query-time.

We are particularly motivated by Druid \cite{yang2014druid,druidblog} 
and similar offline systems \cite{IM2018Pinot} which aggregate over 
query-specific summaries for disjoint segments of data.
However, these systems use mergeable summaries as-is, and do not optimize
for improving accuracy under aggregation or take advantage of additional
memory at query time to accumulate results more precisely.
The authors in \cite{Yi2014Indexing} apply hierarchical strategies
to maintain summary collections for interval 
queries but like mergeable summaries maintain do not reduce error when
combining summaries.
Systems like BlinkDB \cite{agarwal2013blinkdb}, STRAT \cite{chaudhuri2007strat}, and AQUA \cite{acharya1999aqua} maintain random stratified samples to support general-purpose queries.
Our choice of minimizing mean squared error over a workload
follows the setup in STRAT \cite{chaudhuri2007strat}.
However, individual simple random samples are not as accurate as specialized
frequency or quantile summaries \cite{mozafari2015handbook}.

Techniques for summarizing hierarchical intervals \cite{basat2018interval} are complementary,
but incur additional storage overhead making them less accurate than Cooperative summaries
and scale poorly to cubes with multiple dimensions \cite{qadarji2013hierarchical}.

\minihead{Streaming and Mergeable Summaries}
Many compact data summaries are developed in the streaming literature \cite{greenwald2001space,misra1982gries,cormode2005countmin,karnin2016optimalquant}, including summaries
for sliding windows \cite{arasu2004window}.
However, they assume a different system model than \storyboard provides.
The standard streaming model generally considers queries with working memory limited during summary construction \cite{muthukrishnan2005data}.
Mergeable summaries \cite{agarwal2012mergeable} allow combining
multiple summaries but require that intermediate results take up no more space
than the inputs, and thus merely maintain relative error under merging.
Other work targeting \preagg systems has focused on 
improving summary update and merge runtime performance \cite{gan2018moments,masson2019ddsketch} 
rather than improving the accuracy of merged summary results.

\minihead{Other Summarization Models}
The \storyboard model, where more memory is available for construction and aggregation
than for storage, is closer to the model used in non-streaming settings including
discrepancy theory and communication theory.

Coresets and $\epsilon$-approximations are data structures for approximate queries
that allow more resource-intensive precomputation and aggregation \cite{philips2017coreset}. 
$\epsilon$-approximations are part of discrepancy theory 
which attempts to approximate an underlying distribution with proxy samples \cite{Chazelle2000discrepancy}.
We draw inspiration from discrepancy theory to manage error accumulation
in our cooperative summaries, especially the results in \cite{spencer1977balancing} 
which pioneered the use of the $\cosh$ cost function.
Other work in this area minimize error accumulation along multiple dimensions \cite{philips2008algsterrain}.
However, we are not away of coreset or $\epsilon$-approximations that allow for
complex queries \storyboard supports: quantiles and item frequencies over multiple data segments. 
and cube aggregations.
In particular, existing work supporting range queries \cite{philips2008algsterrain} do not provide
per-segment local guarantees.

Work in communication theory and distributed streaming assume a setting where sending
summaries over the network is a bottleneck equivalent to storage costs limits in \storyboard.
There is existing work analyzing how multiple random samples can be combined
to reduce aggregate error in this setting \cite{huang2011optimalsamp,zhao2006distributedice}.
However, in communication theory the samples are constructed per-query, while
\storyboard precomputes summaries that can be used for arbitrary future queries.
Furthermore the random samples are not as space efficient as cooperative summaries.

Related techniques in differential privacy \cite{qadarji2013hierarchical,cormode2019rangeprivacy} and
matrix rounding \cite{doerr2006matrix} consider approximate representations of data segments
for the purposes of privacy, but do not explicitly optimize for space or support heavy hitters
and quantile queries.
\section{Conclusion}
\label{sec:conclusion}
When aggregating multiple precomputed summaries, \storyboard
optimizes and accumulates summaries for reduced query error.
It does so by taking advantage of additional memory resources
at summary construction and aggregation while targeting a common
class of structured frequency and quantile queries.
This system can thus efficiently serve a range of monitoring
and data exploration workloads.
Extensions to other query and aggregation types are a rich area for future work.
\iftoggle{arxiv}{}{
	\begin{appendix}
	\section{Cooperative Summary Proofs}
\label{sec:freq_pot_proof}
\minihead{Lemma~\ref{lem:coopfreqpot}}
\begin{proof}
Recall that we have a segment 
$$\DD_t = \{x_1 \mapsto \delta_1, \dots, x_r \mapsto \delta_r\}.$$
Let $H$ be the set of local heavy hitters $H = \{x_i : \delta_i \geq h\}$
and let $U' = U \setminus H$ be the remaining items.
We can decompose our summary as $S_t = S_H \cup S_V$
where $V = S_t \setminus H$.
\begin{align}
S_H &= \{x_i \mapsto \delta_i : x_i \in H\} \\ 
S_V &= \{x_i \mapsto \min(\ea_{t-1}(x_i)+\delta_i, rh) : x_i \in V\}.
\end{align}
This keeps $\ea_t(x) \geq 0$ across segments, i.e. our estimates are always underestimates.

Let $G = L_t - L_{t-1}= \sum_{x_i \in U} \left[\pot(\ea_{t}(x_i)) - \pot(\ea_{t-1}(x_i))\right]$
where $\pot(z) = \exp(\alpha z)$.
For heavy hitters $\ea_{t}(x_i) = \ea_{t-1}(x_i)$ so they do not change the cumulative cost $L_t$.
\begin{align*}
G &= \sum_{x_i \in V} \left[\pot(\max(\ea_{t-1}(x_i) + \delta_i - rh,0)) - \pot(\ea_{t-1}(x_i))\right] \\
 & \quad + \sum_{x_i \in U' \setminus V} \left[\pot(\ea_{t-1}(x_i) + \delta_i) - \pot(\ea_{t-1}(x_i))\right] 
\end{align*}
Simplifying using $\max(0,y) = y + (0-y)1_{y\leq 0}$ and $\pot(x+y)=\pot(x)\pot(y)$:
\begin{align*}
G &= \sum_{x_i \in U' \setminus V} \pot(\ea_{t-1}(x_i) + \delta_i)\left[1 - \pot(-\delta_i)\right] \\
 & \quad + \sum_{x_i \in V} \pot(\ea_{t-1}(x_i) + \delta_i)\left[\pot(-rh) - \pot(-\delta_i)\right] \\
 & \quad + \sum_{x_i \in V} \left[\pot(0) - \pot(\ea_{t-1}(x_i) + \delta_i - rh)\right]\cdot 1_{\ea_{t-1}+\delta_i \leq rh}
\end{align*}

For non-heavy hitters, Algorithm~\ref{alg:greedy_freq} selects items in $V$ with the highest $\ea_{t-1}(x_i) + \delta_i$.
If we let $\ell = \arg\min_{x_i \in V} \ea_{t-1}(x_i) + \delta_i$ then
\begin{align*}
\forall x_i \in V&\quad \ea_{t-1}(x_{\ell}) + \delta_{\ell} \leq \ea_{t-1}(x_i) + \delta_{i} \\
\forall x_i \in U' \setminus V&\quad \ea_{t-1}(x_{\ell}) + \delta_{\ell} \geq \ea_{t-1}(x_i) + \delta_{i}.
\end{align*}
Technical but standard applications of the inequalities
$\pot(x) \geq 1+ax$, and $\pot(x) \leq 1 + \alpha x + \alpha^2 x^2/2$ for 
$x \leq 0$ yields:
\begin{align*}
G \leq& \pot(\ea_{t-1}(x_\ell) + \delta_\ell) |V| \left[
\alpha h - \alpha h r + \alpha^2 h^2 r^2/2 
\right] + \alpha r h |V|
\end{align*}
$|V| \leq s$ and $h \leq |\DD_t|/s$ so when $\alpha \leq \frac{2}{h}\frac{r-1}{r^2}$, $G \leq \alpha r |\DD_t|$
\end{proof}

\label{sec:quant_cosh_proof}
\minihead{Lemma \ref{lem:quant_cosh}}
\begin{proof}
First note that the choice of which element $z_j$ is chosen from each chunk
for inclusion in the summary sample $S_t$ 
does not affect $\ea_t(x)$ for $x$ outside the
chunk $\DD_{t,j}$ so we can consider the choices independently.
This is because the selected element is assigned a proxy count equal to the
population of the whole chunk $h=|\DD_{t,j}|=|\DD_t|/s$.

Let $L_{t,j} \coloneqq \sum_{x_i \in \DD_{t,j}} \pot\left(\ea_t(x_i)\right)$ be total cost for chunk $j$.
Since Algorithm~\ref{alg:greedy_quant} selects a value $z$ that minimizes $L_t$, 
the final value for $L_{t,j}$ must be lower than any weighted average of the possible $L_{t,j}$ for
different choices of $x$.
\begin{align*}
L_{t,j} &\leq \sum_{z \in \DD_{t,j}} \frac{f_{\DD_t}(z)}{h} \left[
\sum_{x\in \DD_{t,j}} \pot\left(\ea_{t-1}(x) + r_{\DD_{t,j}}(x) - 1_{x \geq z}h\right)
\right]
\end{align*}
Abbreviate $p_x \coloneqq \frac{1}{h}r_{\DD_{t,j}}(x) = \frac{1}{h}\sum_{x_i \in \DD_{t,j}} \delta_i \cdot 1_{x_i \leq x}$.
Switching the order of summation gives:
\begin{align*}
L_{t,j} &\leq \sum_{x\in \DD_{t,j}}[
p_x \pot(\ea_{t-1}(x) + hp_x - h) 
\\ & \qquad 
+ (1-p_x)\pot(\ea_{t-1}(x) + hp_x) ]
\end{align*}
Now we can make use of Lemma~\ref{lem:coshineq} below to simplify
\begin{align*}
L_{t,j} 
	&\leq \exp(\alpha^2 h^2/2) L_{t-1,j}
\end{align*}

Finally, since $L_t = \sum_{j=1}^{s} L_{t,j}$ we have the lemma.
\end{proof}

Lemma~\ref{lem:coshineq} can be proven using the cosh angle addition 
formula and Taylor expansions.
\begin{lemma}
For $0 \leq p \leq 1$ and $t \geq 0$
\begin{equation}
p \cosh(x + t(p-1)) + (1-p)\cosh(x+tp) \leq \exp(t^2/2)\cosh(x)
\label{eqn:coshineqq}
\end{equation}
\label{lem:coshineq}
\end{lemma}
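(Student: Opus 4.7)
The plan is to reduce the claim to a standard subgaussian moment bound. First, I would expand each $\cosh$ using the identity $\cosh(x+y) = \tfrac{1}{2}(e^x e^y + e^{-x} e^{-y})$, which rewrites the left-hand side as
$$\frac{e^x}{2}\bigl[p\, e^{t(p-1)} + (1-p)e^{tp}\bigr] + \frac{e^{-x}}{2}\bigl[p\, e^{-t(p-1)} + (1-p) e^{-tp}\bigr].$$
This factors out the $x$-dependence and reduces the problem to bounding the two bracketed scalars uniformly in $x$.

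Next, I would recognize each bracketed factor as the moment generating function evaluated at $\pm 1$ of a random variable $Y$ that equals $t(p-1)$ with probability $p$ and $tp$ with probability $1-p$. Direct computation gives $E[Y] = p\cdot t(p-1) + (1-p)\cdot tp = 0$, and by inspection $Y \in [-t(1-p), tp] \subseteq [-t, t]$. I would then invoke the standard subgaussian moment bound for bounded centered variables: by convexity of $e^y$ on $[-t,t]$, the chord bound $e^{y} \leq \tfrac{t+y}{2t}e^{t} + \tfrac{t-y}{2t}e^{-t}$ holds pointwise, and taking expectations with $E[Y]=0$ yields $E[e^{Y}] \leq \cosh(t)$. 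The same argument applied to $-Y$ gives $E[e^{-Y}] \leq \cosh(t)$. A Taylor coefficient comparison ($(2k)! \geq 2^k k!$) then gives $\cosh(t) \leq e^{t^2/2}$, so both bracketed factors are at most $e^{t^2/2}$.

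Plugging the bounds back into the decomposition yields
$$\text{LHS} \leq \frac{e^x + e^{-x}}{2}\, e^{t^2/2} = \cosh(x)\, e^{t^2/2},$$
which is exactly the desired inequality. The main obstacle is only to verify the two ingredients of the subgaussian argument (that the induced two-point distribution is centered and lies in $[-t,t]$); the rest is bookkeeping. The route suggested in the paper, using the $\cosh$ angle-addition formula $\cosh(x+y) = \cosh(x)\cosh(y) + \sinh(x)\sinh(y)$ together with Taylor expansions, is also viable but requires more care because the coefficient of $\sinh(x)$, namely $p\sinh(t(p-1)) + (1-p)\sinh(tp)$, does not have a fixed sign in $p$; the exponential decomposition sidesteps this by treating both signs of $x$ symmetrically.
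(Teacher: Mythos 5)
Your proof is correct, and it takes a genuinely different route from the paper's. The paper expands via the angle-addition formula $\cosh(x+y)=\cosh(x)\cosh(y)+\sinh(x)\sinh(y)$, replaces $\sinh(x)$ by $\cosh(x)$ to reduce the claim to bounding $p e^{t(p-1)}+(1-p)e^{tp}$, and finishes with a case split ($t<2$ via Taylor expansion, $t\geq 2$ trivially). You instead split $\cosh$ into its two exponential halves, recognize each bracket as $E[e^{\pm Y}]$ for the mean-zero two-point variable $Y\in\{t(p-1),tp\}\subseteq[-t,t]$, and apply the standard Hoeffding-type chord argument plus $\cosh(t)\le e^{t^2/2}$. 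Your symmetric treatment of the $e^{x}$ and $e^{-x}$ parts buys real robustness: the paper's intermediate inequality $\mathrm{LHS}\le\cosh(x)\left[p e^{t(p-1)}+(1-p)e^{tp}\right]$ is equivalent to $p\sinh(t(p-1))+(1-p)\sinh(tp)\ge 0$, and this coefficient can in fact be negative (e.g.\ $t=1$, $p=1/4$), so your caveat about the unsigned $\sinh(x)$ coefficient is not just a matter of "more care" --- the paper's step needs an actual patch (the final bound survives only because of slack elsewhere), whereas your argument is airtight, avoids any case analysis, and via Hoeffding's lemma would even yield the stronger constant $e^{t^2/8}$ since $Y$ has range $t$. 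The only cosmetic point is to note the trivial $t=0$ case separately, since your chord bound divides by $2t$.
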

\iftoggle{arxiv}{
\begin{proof}
We abbreciate $\cosh,\sinh$ as $c,s$ and the left hand side of 
Equation~\ref{eqn:coshineqq} as $LHS$.
Using the angle addition formula:
\begin{align*}
LHS &= p\left[c(x)c(t(p-1)) + s(x)s(t(p-1))\right] \\
	&\quad + (1-p)\left[c(x)c(tp)+s(x)s(tp)\right]
\end{align*}
Then since $s(x)\leq c(x)$:
\begin{align*}
LHS &\leq pc(x)\left[c(t(p-1)) + s(t(p-1))\right] \\
	&\quad + (1-p)c(x)\left[c(tp)+s(tp)\right] \\
	&= c(s)\left[p\exp(t(p-1)) + (1-p)\exp(tp)\right]
\end{align*}

We now consider two cases: $t<2$ and $t\geq 2$.

If $t<2$, we expand out taylor series to get that:
\begin{align*}
p\exp(t(p-1)) + (1-p)\exp(tp) &\leq 1 + \frac{t^2}{2}(p(1-p))\cdot 3 \\
&\leq 1+t^2/2 \leq \exp(t^2/2)
\end{align*}
If $t\geq 2$ then 
\begin{align*}
p\exp(t(p-1)) + (1-p)\exp(tp) &\leq \exp(tp) \\
&\leq \exp(t) \leq \exp(t^2/2)
\end{align*}

In either case we can conclude that:

\begin{align*}
LHS &\leq \cosh(x)\exp(t^2/2)
\end{align*}

\end{proof}
}{}

\iftoggle{arxiv}{
\subsection{Error Lower Bounds}
\label{sec:error_lowerbounds}
In this section we will provide details on an adversarial dataset
for which no online selection of items for a counter-based
summary can achieve better than absolute $\ea = \Omega(\log{k})$ error
for item frequency queries.
\begin{theorem}
There exists a sequence of $k=2^{h+1}$ data segments $\DD_i$ consisting
of $|\DD_i| = 2s$ item values each such that for all possible selections of $s$ items for counter-based summaries $S_i$,
$\exists x. |f_{\DD_i,\dots,\DD_k}(x) - \hat{f}_{S_i,\dots,S_k}(x)| \geq h$.
\end{theorem}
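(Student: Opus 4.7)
The plan is to exhibit an adversarial sequence of $k = 2^{h+1}$ segments and argue, via a potential-style argument dual to the one used for the \coopf upper bound in Lemma~\ref{lem:coopfreqpot}, that no counter-based summary can keep the maximum accumulated error below $h$. First I would take a fixed universe $U$ of $2s$ items and let every segment contain each item exactly once with count $1$, so that the local-accuracy constraint caps each stored counter at $1 + 2r$ while forcing unstored items to contribute exactly $1$ to the per-segment error. I would then partition the $k$ segments into $h+1$ doubling phases of sizes $2^0, 2^1, \dots, 2^h$; these phases are an analysis device rather than a structural feature of the sequence, which remains oblivious.

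Second, in each phase $j$ I would argue a per-phase deficit lower bound. The summary has only $s \cdot 2^j$ counter slots in phase $j$, yet the true item mass is $2s \cdot 2^j$. Even granting maximum overcounting, averaging the total storable weight $(1+2r)\cdot s \cdot 2^j$ across the $2s$ items shows that some item receives at most $\tfrac{(1+2r)}{2}\cdot 2^j$, leaving an undercount of order $\Omega(2^j)$ on that item (for $r$ close to $1$, this can be tuned to yield a constant-fraction-of-$2^j$ deficit). This is the reverse of the per-segment growth step used in the upper-bound analysis.

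Third, to aggregate across phases I would apply a discrepancy-style argument: define a lower-bound potential $L_t = \sum_{x \in U} \cosh(\alpha \ea_{\pre_t}(x))$ and show that the accumulated per-phase deficits drive $L_k \geq \exp(\Omega(\alpha h))$, which, by inverting the potential inequality, forces $\max_x |\ea_{\pre_k}(x)| \geq h$. The hardest step is this aggregation: the item carrying the per-phase deficit may shift from phase to phase, so a naive union over phases gives only $\Omega(1)$ rather than $\Omega(h) = \Omega(\log k)$. I would overcome this obstacle either (i) by a Yao min-max argument, showing that a randomized adversary picking the phase and target item at random forces expected error $\Omega(h)$ against any deterministic summary, and then derandomizing to obtain a fixed bad sequence; or (ii) by an amortized potential argument showing that the summary's total per-item overcompensation budget, summed over all $h+1$ phases, is exponentially insufficient under the $\cosh$ potential to simultaneously keep every item's accumulated error below $h$. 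Either route matches the paper's remark that the $\log k$ upper bound of Corollary~\ref{cor:freq_concrete} is essentially tight.
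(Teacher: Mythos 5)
There is a genuine gap, and it is in the construction itself rather than in the aggregation step you flag as hardest. With a fixed universe of $2s$ items where every segment contains each item exactly once, a counter-based summary can keep \emph{every} item's cumulative error at most $1$ forever: in odd segments store items $1,\dots,s$ with count $2$ each, in even segments store items $s+1,\dots,2s$ with count $2$ each (this stays well within the per-segment slack of $r\,|\DD_t|/s = 2r$). After every even $t$ all accumulated errors are exactly zero, so no potential argument, Yao-style averaging, or amortization can extract an $\Omega(h)$ bound from this sequence --- the inequality you would need is simply false for it. The flaw is in your second step: averaging the storable weight only shows that some item is undercounted \emph{within a phase}, but the summaries are free to rotate which items they shortchange and to overcount them later; this compensation is exactly what \coopf is designed to do, and your oblivious small-universe sequence gives it everything it needs. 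The trouble you encounter in step three is therefore not a technical aggregation issue but a symptom that the adversarial sequence is too weak.

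The paper's proof avoids this by making the universe large and by only ever charging error to items that are \emph{never stored in any summary}, whose error equals their full true count no matter how other items are overcounted. Concretely: the first $2^h$ segments each contain $2s$ fresh singleton items, so at least $s2^h$ items are never stored; the next $2^{h-1}$ segments are built entirely from those, leaving at least $s2^{h-1}$ items never stored but seen twice; iterating halves the unstored set while incrementing its count, so after $2^{h+1}-1 < k$ segments some item has appeared $h+1$ times yet has estimate zero, giving error at least $h$. Note this adversary is adaptive (later segments depend on which items were stored earlier); for any fixed deterministic summarization rule it unrolls into a concrete sequence, which is how the existential statement should be read. To rescue your route you would have to let the universe grow with $h$ and feed the summaries items they have so far failed to store, rather than recycling a fixed $2s$-item universe.
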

\begin{proof}
Consider a universe of item values $U=1,\dots,2s2^{h}$.
For $i=1,\dots,2^h$ let $\DD_i = \{2s(i-1)+1,\dots,2si\}$ where
each item occurs at most once.
Since each summary $S_i$ can only store $s$ item values, 
there must be a set of $s2^h$ items ($U_{1}$) that are not stored in any summary, but that have appeared at least once in the data.
Now let the next $2^{h-1}$ data segments $\DD_i$ for $i\geq 2^h+1$
constain $2s$ distinct item values each from $U_{1}$.
Again, since each summary can only store $s$ item values
now there must be a set of $s2^{h-1}$ items ($U_{2}$) that are not
stored in any summary, but that have appeared twice in the data.
This repeats for for increasing $U_{i}$: at each stage $U_i$ we have
data segments come in that contain only items the summaries have not
been able to store, until we have at least one item not stored in any summary but that has appeared $h+1$ times in the data.
\end{proof}
}{}
	\end{appendix}
}

\iftoggle{arxiv}{
\section*{Acknowledgments}
{
\small
This research was made possible with feedback and assistance from our collaborators at Microsoft including Atul Shenoy, Asvin Ananthanarayan, and John Sheu, as well as collaborators at Imply including Gian Merino.
We also thank members of the Stanford Infolab
for their feedback on early drafts of this paper.
This research was supported in part by affiliate members and other supporters of the Stanford DAWN project---Ant Financial, Facebook, Google, Infosys, NEC, and VMware---as well as Toyota Research Institute, Northrop Grumman, Amazon Web Services, Cisco, and the NSF under CAREER grant CNS-1651570.
}
\bibliographystyle{ACM-Reference-Format}
}{
\bibliographystyle{abbrv}
}
\Urlmuskip=0mu plus 1mu
\bibliography{paper} 

\iftoggle{arxiv}{
	\begin{appendix}
	
	\end{appendix}
}{}

\end{document}